\def\checkmark{\tikz\fill[scale=0.4](0,.35) -- (.25,0) -- (1,.7) -- (.25,.15) -- cycle;} 
\newcolumntype{P}[1]{>{\centering\arraybackslash}p{#1}}
\newcolumntype{M}[1]{>{\centering\arraybackslash}m{#1}}
\newcommand{\secureDL}{\textbf{SecureDL}}
\newtheorem{assumption}{Assumption}
\newtheorem{theorem}{Theorem}
\begin{document}

\date{}


\title{\Large \bf Privacy-Preserving Aggregation for Decentralized Learning with Byzantine-Robustness}

\author{Ali Reza Ghavamipour, Benjamin Zi Hao Zhao, Oğuzhan Ersoy, Fatih Turkmen}

\maketitle

\begin{abstract}

Decentralized machine learning (DL) has been receiving an increasing interest recently due to the elimination of a single point of failure, present in Federated learning setting. 
Yet, it is threatened by the looming threat of Byzantine clients who intentionally disrupt the learning process by broadcasting arbitrary model updates to other clients, seeking to degrade the performance of the global model. 
In response, robust aggregation schemes have emerged as promising solutions to defend against such Byzantine clients, thereby enhancing the robustness of Decentralized Learning. 
Defenses against Byzantine adversaries, however, typically require access to the updates of other clients, a counterproductive privacy trade-off that in turn increases the risk of inference attacks on those same model updates.

In this paper, we introduce \secureDL, a novel DL protocol designed to enhance the security and privacy of DL against Byzantine threats. \secureDL~facilitates a collaborative defense, while protecting the privacy of clients' model updates through 
secure multiparty computation. 
The protocol employs efficient computation of cosine similarity and normalization of updates to robustly detect and exclude model updates detrimental to model convergence. 
By using MNIST, Fashion-MNIST, SVHN and CIFAR-10 datasets, we evaluated \secureDL~against various Byzantine attacks and compared its effectiveness with four existing defense mechanisms. 
Our experiments show that \secureDL~is effective even in the case of attacks by the malicious majority (e.g., 80\% Byzantine clients) while preserving high training accuracy.

\end{abstract}

\section{Introduction}



The traditional approach to training a machine learning model has been the aggregation of training dataset to a central server and performing the training locally on this server. However centralization is challenging when the data is geographically dispersed across fleets of devices or workers, which is further coupled with the increasing concerns around data privacy. In response, researchers have proposed alternatives to centralized learning as a means to train machine learning models locally (i.e., data locality) while preserving privacy~\cite{lian2017can}. 

Collaborative machine learning is a paradigm shift enabling diverse entities to collaboratively train models with improved accuracy and robustness, while maintaining privacy of collaborators. This approach is particularly beneficial when data centralization is impractical due to privacy, bandwidth, or regulatory constraints. Arguably, the most well-known approach of collaborative machine learning is Federated Learning (FL)~\cite{mcmahan2017communication}. FL features a central server which aggregates the individual model updates obtained from collaborators who train their models locally using their own private dataset. However, FL places an inherent trust in the central server, a potential single point of failure~\cite{lian2017can,bonawitz2019towards,ghavamipour2023federated}, that may result with the overlooking of potential issues related to security 
and computational efficiency.

Decentralized Learning (DL) is another approach that has gained popularity as a scalable and communication-efficient alternative to existing collaborative learning paradigms. Compared to FL schemes~\cite{lian2017can}, DL eliminates the need for having a central aggregator which in turn enhances client's privacy and prevents issues related to single point of failure~\cite{cheng2019towards, vogels2021relaysum, hu2019decentralized}. 
We illustrate the structural differences between FL and DL in Figure~\ref{DVvsFL}. Specifically, DL implements the global model training by aggregating the model parameters on-device with peer-to-peer message exchanges between clients (similar to gossip-based algorithms over random networks~\cite{hegedHus2019gossip}), instead of centralized aggregation. In DL, every client actively updates its own model by using local data and updates received from its neighbors that results with optimized learning in terms of model accuracy and convergence speed.

\begin{figure}[t]
\centering
  \includegraphics[width=\linewidth]{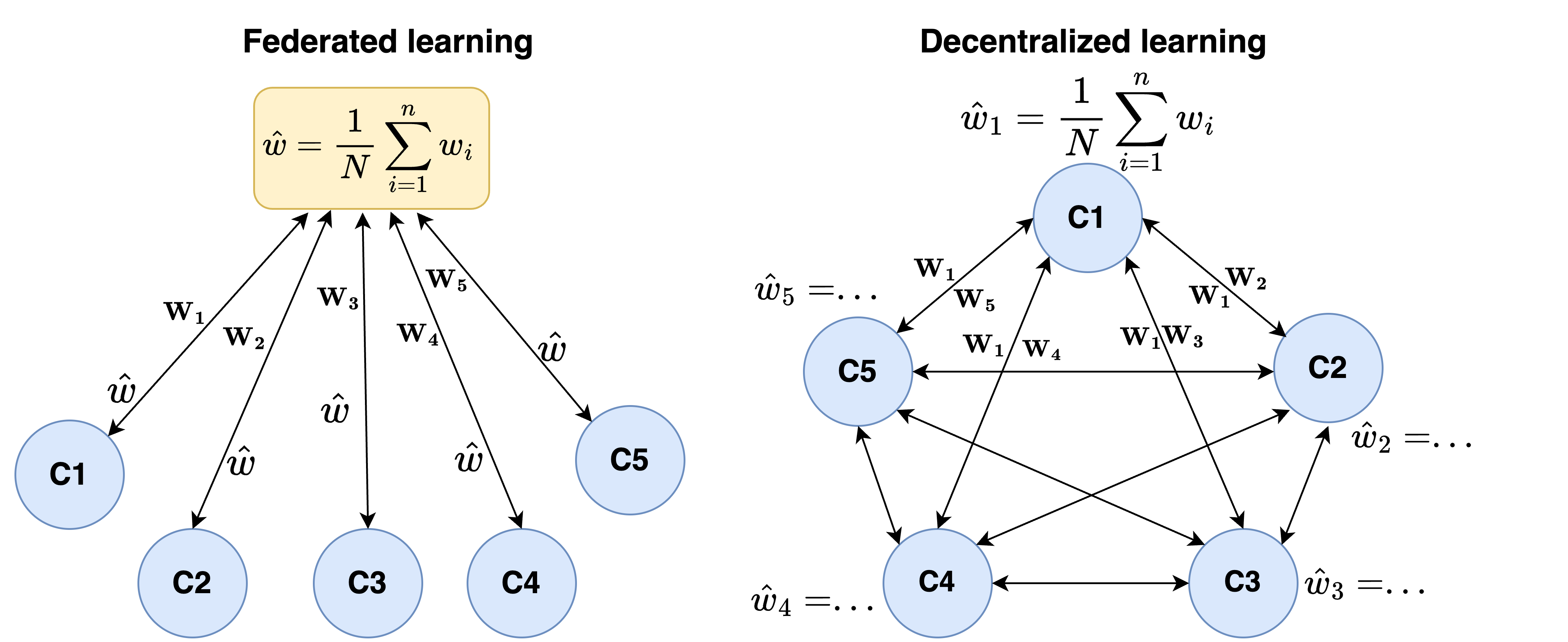}
  \caption{Collaborative learning system in a Federated (left) and decentralized fashion.}
  \label{DVvsFL}
\end{figure}

Given all these benefits, DL systems are particularly more susceptible to privacy attacks, both under passive and active security models, with distinct privacy challenges compared to those encountered in FL \cite{pasquini2023security}. More specifically, the architecture of DL broadens the attack surface, allowing any client within the system to perform privacy attacks, or result with more data leakages. 

In both FL and DL, the collaborative nature of these systems exposes them to significant risks from Byzantine adversaries. These adversaries can exploit the system by compromising genuine clients, thereby submitting updates that can undermine the model's integrity. Specifically, they can launch poisoning attacks in the form of data poisoning—tainting the local training data~\cite{biggio2012poisoning}—or model poisoning~\cite{fang2020local,bhagoji2019analyzing}—altering the updates sent to the central model. Such manipulations can drastically affect the global model's accuracy, leading to large number of incorrect predictions. This vulnerability highlights the pressing need for enhanced security protocols to protect the collaborative learning process from these adversarial threats.

\begin{table}[ht]
\centering
\scriptsize
\caption{Comparison of \secureDL~ with the existing robust aggregation schemes and their main defense mechanisms. Abbreviations: \textbf{EucD}: Euclidean distance, \textbf{PerC}: Performance Check, \textbf{CosS}: Cosine Similarity, \textbf{Coord}: Coordinate-wise operations, \textbf{Norm}: Normalization.}
\label{tab:existing}

\setlength{\tabcolsep}{6pt} 
\resizebox{1.0\columnwidth}{!}{%
\begin{tabular}{|p{0.9cm}|ccclcc|c|c|}
\hline
\multirow{3}{*}{Rules} &
  \multicolumn{6}{c|}{Main defense mechanisms} &
  \multirow{2}{*}{\begin{tabular}[c]{@{}c@{}}\# Byzantine \\ tolerance\end{tabular}} &
  \multirow{2}{*}{\begin{tabular}[c]{@{}c@{}}Privacy\end{tabular}} \\ \cline{2-7}
 &
  \multicolumn{1}{c|}{EucD} &
  \multicolumn{1}{c|}{PerC} &
  \multicolumn{2}{c|}{CosS} &
  \multicolumn{1}{c|}{Coord} &
  Norm &
   &
   \\
   \hline
DKrum &
  \multicolumn{1}{c|}{\checkmark} &
  \multicolumn{1}{c|}{} &
  \multicolumn{2}{c|}{} &
  \multicolumn{1}{c|}{} &
   &
  \begin{tabular}[c]{@{}c@{}}$B < {n}/{2}$\end{tabular} &
  No \\ \hline
BRIDGE &
  \multicolumn{1}{c|}{} &
  \multicolumn{1}{c|}{} &
  \multicolumn{2}{c|}{} &
  \multicolumn{1}{c|}{\checkmark} &
   &
  \begin{tabular}[c]{@{}c@{}}$B < {n}/{2}$\end{tabular} &
  No \\ \hline
MOZI &
  \multicolumn{1}{c|}{\checkmark} &
  \multicolumn{1}{c|}{\checkmark} &
  \multicolumn{2}{c|}{} &
  \multicolumn{1}{c|}{} &
   &
  $B \leq n-1$ &
  No \\ \hline
DMedian &
  \multicolumn{1}{c|}{} &
  \multicolumn{1}{c|}{} &
  \multicolumn{2}{c|}{} &
  \multicolumn{1}{c|}{\checkmark} &
   &
  \begin{tabular}[c]{@{}c@{}}$B < {n}/{2}$
\end{tabular} &
  No \\ \hline
\textbf{\secureDL} &
  \multicolumn{1}{c|}{} &
  \multicolumn{1}{c|}{} &
  \multicolumn{2}{c|}{\checkmark} &
  \multicolumn{1}{c|}{} &
  \checkmark &
  $B \leq n-1$ &
  Yes \\ \hline
\end{tabular}
}
\end{table}



Numerous techniques have been developed to prevent such poisoning attacks (mainly for FL) including coordinate-wise median~\cite{yin2018byzantine}, geometric median~\cite{chen2017distributed}, trimmed-mean~\cite{xie2018phocas}, and Krum~\cite{blanchard2017machine} as part of the defenses against both passive and active attackers, that are also applicable to the decentralized learning context~\cite{fang2019bridge,yang2019byrdie,peng2021byzantine,he2022byzantine}. 
Despite the progress, 
a limitation of these methods is their reliance on clients receiving model updates directly from their collaborators, and thus observing the models of (or be observed by) other clients which may compromise privacy if certain security measures are not in place. 
This consequently breaches a core tenant of DL\cite{pasquini2023security} that aims to provide privacy protection for each collaborator's model without the need of an aggregator.

In this paper, we present \secureDL, a privacy-preserving protocol to train machine learning models in a decentralized manner.
To the best of our knowledge, \secureDL~is the first such protocol that is also resilient to Byzantine attacks. 
By employing a robust defense mechanism against Byzantine clients based on Cosine-similarity and normalization that are coupled with a secret sharing mechanism, \secureDL~allows clients to collaboratively identify malicious updates while maintaining the privacy of each clients’ model updates.

The core of \secureDL~is the Byzantine-resilient aggregation mechanism, where each received update is scrutinised prior to aggregation, specifically, with careful characterization of the update direction. 
Additionally, \secureDL~ensures the coherence of updates from each client by calibrating their magnitude, aligning them within the same norm. 
To achieve these functionalities, we use a mix of secure comparison, inversion, square rooting and normalization operations in addition to the basic addition and multiplication over secret shared data.

Our proposed protocol guarantees that each client only accesses the aggregated value of the received model updates, significantly mitigating the privacy attack surface in DL even when majority of the clients are dishonest. 



Thus, the main contributions of the paper are as follows:

\begin{itemize}[topsep=0pt, itemsep=0pt]
    \item We introduce \secureDL, a novel privacy-preserving and Byzantine-resilient decentralized learning method that utilizes secure multiparty computation.
    \item We provide theoretical analysis regarding privacy guarantees offered by \secureDL~and the impact of the changes we made on the model convergence of the overall training process.
    \item We empirically compare the resilience of \secureDL~against a suite of state-of-the-art Byzantine attacks with the existing defenses over MNIST, Fashion-MNIST, SVHN and CIFAR-10 datasets. 
    \item We perform a thorough empirical analysis to assess overheads imposed by our secure aggregation protocol, to clearly characterize the cost of privacy protection on system performance in terms of computational time and network communications.
\end{itemize}

Additionally, we release the source code\footnote{The source code will be publicly available upon acceptance.} of  \secureDL~framework to stimulate further research in privacy-preserving and Byzantine-resilient decentralized learning.

\section{Related work}

Several strategies have been developed to combat Byzantine attacks in Federated training environments where a central server manages the training \cite{blanchard2017machine,chen2017distributed,guerraoui2018hidden,xie2019zeno}. These Byzantine-resistant optimization algorithms utilize robust aggregation rules to amalgamate model updates from all clients, thereby safeguarding the training against disruptions caused by malicious actors. Some methods, highlighted in \cite{pillutla2022robust} and \cite{elkordy2022heterosag}, focus on distance-based techniques. Another method involves coordinate-wise defenses, which scrutinize updates at the client coordinate level to identify anomalies. However, these can also be vulnerable to more subtle attacks \cite{baruch2019little}. Other strategies, referenced in \cite{regatti2020bygars,cao2020fltrust,xie2020zeno++}, adopt performance-based criteria. 

Distance-based methods \cite{yin2018byzantine} aim to exclude updates that significantly deviate from the benign clients' average, whereas coordinate-wise defenses look for irregularities in each coordinate of the updates. Despite their effectiveness, these strategies face challenges from sophisticated attacks, as described in \cite{baruch2019little}, where Byzantine clients craft updates that are malicious yet appear similar in distance to benign gradients. In contrast, performance-based filtering, which is employed to assess each client's model updates, often leads to more favorable results. However, it's important to note that these approaches rely on an auxiliary dataset located at the server~\cite{cao2020fltrust}.

"While significant advancements have been made in creating Byzantine-robust training algorithms for federated learning (FL), exploration of Byzantine resilience in DL has been less extensive. However, strides have been made to enhance Byzantine resilience within DL, with proposals such as ByRDiE~\cite{yang2019byrdie} and BRIDGE~\cite{fang2019bridge}, both of which adapt the Trimmed-median approach from FL to DL. ByRDiE leverages the coordinate descent optimization algorithm, while BRIDGE implements Stochastic Gradient Descent (SGD) for learning. However, \cite{guo2021byzantine} states that both approaches, ByRDiE and BRIDGE, remain susceptible to sophisticated Byzantine attacks."


In response to these vulnerabilities, Mozi \cite{guo2020towards} was developed as a Byzantine-resilient algorithm for decentralized learning. It employs a two-stage process to improve convergence, combining Euclidean distance-based selection with performance validation using training samples. Subsequently, UBAR \cite{guo2021byzantine} merges performance-based and distance-based strategies to counter Byzantine clients, with each client relying on its local dataset. This combination enables UBAR to effectively address the Byzantine attacks described in \cite{guerraoui2018hidden}. Further in this field, Basil \cite{elkordy2022basil} emerged as a fast, computationally efficient Byzantine-robust algorithm. It is distinguished by its sequential processing, memory assistance, and performance-based criteria. Designed for operation within a logical ring architecture, Basil effectively filters out Byzantine users, demonstrating robustness and operational efficiency. While performance-based algorithms like UBAR and Basil have achieved greater success compared to distance-based ones, they also introduce significant privacy risks due to their direct access to model updates. 

In our study, we have implemented the Median, Krum methods, termed \enquote*{DMedian} and \enquote*{DKrum}, Mozi and BRIDGE specifically for comparison with our proposed protocol. This choice allows us to evaluate the effectiveness of these methods against Byzantine clients within decentralized learning environments. Here, we discuss them in more details. 

\textbf{BRIDGE: } The implementation of Trimmed-Mean in BRIDGE involves excluding extreme values from the received model updates, allowing the remaining local model updates to contribute to the final average. Additionally, the network should be structured to withstand up to \( b \) Byzantine clients. Using a trim parameter \( k < \frac{n}{2} \), the server discards the highest and lowest \( k \) values, then computes the mean of the remaining \( n - 2k \) values for the global model update. To be effective against malicious clients, \( k \) must be at least equal to their number, allowing Trim-mean to manage up to nearly \( 50\% \) malicious clients.


\textbf{Mozi: } This protocol uses a distance-based metric to filter estimates based on Euclidean distance, effectively narrowing down the pool of model updates to those contributors who are most likely benign, thereby enhancing the reliability of the collective input. This is followed by a performance-based stage, where the protocol evaluates the loss of these estimates, refining the aggregation to include only those with superior performance.


\textbf{Krum: } In this approach, each client \( i \) is assigned a score \( s_i \), calculated by summing the squared Euclidean distances between its update \( \boldsymbol{w}_i \) and the updates from other clients. The scoring formula is \( s_i = \sum_{\boldsymbol{w}_j}\left\|\boldsymbol{w}_j - \boldsymbol{w}_i\right\|_2^2 \), where \( \boldsymbol{w}_j \) are the updates from a selected set of \( n-f-2 \) clients. Here, \( n \) represents the total number of clients in the network, and \( f \) is the number of Byzantine clients, which are potentially malicious clients. This selection of \( n-f-2 \) clients is crucial, as it involves choosing the updates that are closest to \( \boldsymbol{w}_i \) in Euclidean distance, effectively filtering out potential outliers or malicious updates. Krum then selects the client with the lowest score to provide the global model update, ensuring the model's resilience against adversarial actions in a scenario with up to \( f \) malicious clients.

\textbf{Median: } The Median aggregation technique, as described in \cite{yin2018byzantine}, chooses the median value of parameters as the consolidated global model, treating each parameter independently. For every \(j\)-th model parameter, the server ranks the corresponding parameters from \(m\) local models, where \( m \) is the number of participating models. The \(j\)-th parameter of the global model is then determined by the median of these ranked values:
\[ \text{Global } w_j = \text{Median} (w_{1j}, w_{2j}, \ldots, w_{mj}) .\] 


The median method can tolerate up to \( \frac{n}{2} - 1 \) Byzantine clients, where \( n \) is the total number of clients in the network. This resilience stems from the fact that as long as the majority of the values are from honest clients, the median will be derived from these honest values, ensuring the integrity of the global model despite the presence of the Byzantine clients.





\section{Preliminaries}

In this section, we provide the necessary background on decentralized learning and introduce the secure building blocks used in this paper.


\subsection{Learning in a Decentralized Setting}\label{dc}

In decentralized learning system, each client (\( c \)) from the group \( C \) directly communicates with a selected group known as their neighbors \( \mathbf{N}(c) \). These connections can be either static, formed at the beginning, or dynamic, changing over time. The network of clients forms an undirected graph \( G=\left\{C, \cup_{c \in C} \mathbf{N}(c)\right\} \), where the vertices represent clients 
and the edges indicate the connections between them\cite{lian2017can}.

In a formal setting, client \( c \) in the set \( C \) owns a private dataset $D_i=\left\{\left(x_i, y_i\right)\right\}_i$ drawn from a hidden distribution $\xi_i$ and combining the private datasets would produce the global dataset $D$ with distribution $\xi$. Each client 
begins with a shared set of model parameters denoted as \( w^0 \). 

The aim of the training process is to identify the optimal parameters, denoted as $\mathbf{w}^*$, for a machine learning model. These parameters are sought to minimize the expected loss across the entire global dataset $D$:

\begin{equation}
\mathbf{w}^* = \underset{\theta}{\arg \min } \frac{1}{|\mathbf{N}(c)|} \sum_{n_i \in \mathbf{N}(c)} \underbrace{\mathbb{E}_{s_i \sim D_i}\left[\mathcal{L}\left(\theta ; s_i\right)\right]}_{\mathcal{L}_i}\label{eqn:parameters}
\end{equation}

where for each client \( n_i \) in the network, \( \mathbb{E}_{s_i \sim D_i}\left[\mathcal{L}\left(\theta ; s_i\right)\right] \) calculates the expected loss for that client's dataset \( D_i \), where \( s_i \) is a sample from \( D_i \). Thus, the formula aims to find the model parameters \( \theta \) that minimize the average expected loss across all clients in the network.

\begin{algorithm}[ht]
\SetKwInOut{Input}{Input}
\SetKwInOut{Output}{Output}
\DontPrintSemicolon

\caption{Decentralized Learning Protocol}
\label{alg:DecentralizedLearning}

\Input{
    Initial model parameters $w_c^0$ for $c \in C$ \\
    Local training data of the client $c$: $X_c$ for $c \in C$
}
\For{$t \in [0, 1, \ldots]$}{
    \textbf{Local optimization step:}\;
    \For{$c \in C$}{
        Sample $x_c^t$ from $X_c$\;
        Update parameters: $w_c^{t+\frac{1}{2}} = w_c^t - \eta \nabla_{w_c^t}(x_c^t, w_c^t)$\;
    }
    \textbf{Communication with neighbors:}\;
    \For{$c \in C$}{
        \For{$u \in \mathbf{N}(c) \setminus \{c\}$}{
            Send $w_c^{t+\frac{1}{2}}$ to $u$\;
            Receive $w_c^{t+\frac{1}{2}}$ from $u$\;
        }
    }
    \textbf{Model updates aggregation:}\;
    \For{$c \in C$}{
        $w_c^{t+1} = \frac{1}{|\mathbf{N}(c)|} \sum_{c \in \mathbf{N}(c)} w_c^{t+\frac{1}{2}}$\;
    }
}
\end{algorithm}

This process continues through several stages until reaching a predetermined stopping point. The stages are as follows:

\begin{enumerate}[topsep=0pt, itemsep=0pt]
    \item At step $t$, each client \( c \) carries out gradient descent on their model parameters, leading to an interim model update, symbolized as \( w_c^{t+1 / 2} \).

    \item Then, the clients exchange their interim model updates \( w_c^{t+1 / 2} \) with their neighbors \( \mathbf{N}(c) \). Concurrently, they also acquire the updates from these neighbors.

    \item Finally, the clients merge the updates received from their neighbors with their own, and use this combined update to modify their local state. A basic approach to aggregation involves averaging all the updates received. Mathematically, this is represented as \( w^{t+1}=\frac{1}{\left|\mathbf{N}(c)\right|} \sum_{c \in \mathbf{N}(c)} w_c^{t+1 / 2} \).
\end{enumerate}

Algorithm \ref{alg:DecentralizedLearning} presents the complete DL training procedure.

\subsection{Secret Sharing}\label{ss}

The principle of secret sharing involves methods for distributing a secret value, denoted as \(x\), among \(n\) entities, resulting in shares \(x_1, \ldots, x_n\). Each participant \(P_i\) receives a uniformly random share \(x_i\) (under modulo \(p\)) of the concealed value over a specific \(p\). To reconstruct the original secret \(x\) from its distributed shares, the parties collaboratively pool their shares \(x_1, \ldots, x_n\) and apply a reconstruction process, involving summing the shares modulo \(p\) (\(x = (x_1 + x_2 + \cdots + x_n) \mod p\)), thereby retrieving \(x\).





Building upon this foundational concept, our work develops protocols centered on n-out-of-n additive secret sharing. We also conduct secure computations on additive shares modulo \(2^k\) within a larger ring modulo \(2^{k+s}\), where \(\sigma = s - \log(s)\) is the statistical security parameter. In cases without ambiguity, we use \([\![ x ]\!]\) to represent \([\![ x ]\!]_{2^k}\), specifically in this work where we choose \(k = 32\).



\subsubsection{Performing Addition and Multiplication on Shares}\label{mul_sec}

Various protocols have been developed to perform operations on shared secrets values. This study focuses on addition and multiplication as key functions. Our notational conventions align with those detailed in~\cite{Randmets2017ProgrammingLF}.

Secretly shared values \( [\![ x ]\!] \) and \( [\![ y ]\!] \) can be directly added as \( [\![ x ]\!] + [\![ y ]\!] = ([\![ x ]\!]_1 + [\![ y ]\!]_1, \ldots, [\![ x ]\!]_n + [\![ y ]\!]_n) \). However, multiplying secret shared values requires network interaction and distinct methodologies for each security environment.

In a dishonest majority setting, the Beaver triples technique~\cite{beaver1991efficient} is used for multiplication operations. This method involves pre-distributing shares (\([\![ a ]\!]\), \([\![ b ]\!]\), \([\![ c ]\!]\)) of a multiplication triple \((a, b, c)\), where \(a\) and \(b\) are generated uniformly randomly, and \(c = a \cdot b\). Once the shares of the inputs and the triple are received, each computation party computes \([\![ \delta ]\!]\) = \([\![ x ]\!]\) - \([\![ a ]\!]\) and \([\![ \epsilon ]\!]\) = \([\![ y ]\!]\) - \([\![ b ]\!]\) locally and then reveals \([\![ \delta ]\!]\) and \([\![ \epsilon ]\!]\) to other parties. The parties can reconstruct \(\delta\) and \(\epsilon\) using these shares. Since \(a\) and \(b\) are generated uniformly random, revealing the shares of \(\delta\) and \(\epsilon\) does not compromise the security of the protocol. Each party then locally computes:
\begin{equation*}
    [\![ w ]\!]_i = [\![ c ]\!]_i + \epsilon \cdot [\![ b ]\!]_i + \delta \cdot [\![ a ]\!]_i + \epsilon \cdot \delta
\end{equation*}
where \( [\![ w ]\!]_i \) is a share of the multiplication result calculated by computation party \textit{i}.


\subsubsection{Secure Comparison Protocol}\label{seccom}

In the context of secret sharing, a secure comparison protocol enables the comparison of additive shares without revealing the underlying private values. It determines if one value is greater, lesser, or equal to another, producing a binary result that preserves the confidentiality of the inputs. This protocol is crucial for privacy-sensitive applications, ensuring that only the outcome of the comparison is disclosed, thereby maintaining the integrity and secrecy of each party's input.

In this work, we have implemented the secure comparison protocol as described by Makri et al. \cite{makri2021rabbit}. Their protocol ensures perfect security over the arithmetic ring $\mathbb{Z}_M$ for positive integers in the range $\left[0,2^l-1\right]$, where $l \in \mathbb{N}$ and $2^{l+1} < M = 2^k$. The protocol's adaptability to different arithmetic models and its effectiveness in settings with a dishonest majority are its key strengths.

The basic idea of the protocol is that the sum of two secret shared values $a, b \in \mathbb{Z}_M$ modulo $M$ is less than $a$ and less than (LT) $b$, iff $a+b$ is reduced by $M$ :

\begin{align*}
(a+b) \bmod M &= a+b-M \cdot \mathrm{LT} (a+b \bmod M, a) \\
&= a+b-M \cdot \mathrm{LT}(a+b \bmod M, b)
\end{align*}




The protocol's online phase involves minimal rounds and arithmetic operations, while the offline phase requires additional resources such as edaBits \cite{escudero2020improved}.

\subsubsection{Inversion Operation} \label{inv}

Our multi-party computation framework does not support inversion operations natively; therefore, we adopt an iterative approximation approach, akin to the method proposed by Nardi~\cite{nardi2012achieving}. This strategy enables us to facilitate inversion using basic operations such as addition and multiplication, effectively circumventing the need for direct inversion over secret share inputs.

The core concept revolves around identifying a value \( B \) that serves as the inverse of a specific value \( X \). To achieve this, we define a function \( f(x) \) aimed at determining \( B \) by setting \( f(x) = 0 \) when \( x = B \), essentially making \( B \) the solution that satisfies the inverse relationship for \( X \). The function is formulated as follows:

\begin{equation}
f(x) = X^{-1} - B \nonumber
\end{equation}

For determining the root of \( f(x) \), the Newton-Raphson method~\cite{agresti2003categorical} is suggested. This results in a stable numerical iterative approximation represented as:

\begin{align*}
    B_{s+1} = 2B_s - B_sM_s \quad B_0 = c^{-1} \\
    M_{s+1} = 2M_s - M_s^2 \quad  M_0 = c^{-1}X
\end{align*}

where \(c\) is a constant, and \(M_s = B_s \times A\), with \(B_s\) converging to the approximate inverse of \(X\) after about 15 iterations~\cite{ghavamipour2022privacy}. 





\subsubsection{Square Root Computation}\label{sqr}

In our protocol, computing square roots is essential but not directly feasible in our implemented multi-party computation framework due to the lack of native support. We overcome this by employing an approximation technique rooted in the Newton-Raphson method, enabling square root calculation through just multiplication and addition~\cite{press2007numerical}. 

The Newton-Raphson method refines an initial approximation of a square root through an iterative process involving a function and its derivative. For computing the square root of a number \( Y \), the method utilizes the function \( f(x) = x^2 - Y \), aiming to identify an \( x \) that satisfies \( f(x) = 0 \). The method updates the approximation iteratively as follows:

\begin{align*}
    B_{s+1} = 2B_s - B_sM_s \quad B_0 = c^{-1} \\
    M_{s+1} = 2M_s - M_s^2 \quad  M_0 = c^{-1}X
\end{align*}

where \( x_n \) is the current guess and \( x_{n+1} \) is the updated guess. Starting from an initial guess \( x_0 \), the method iteratively applies this update. The quotient \( \frac{Y}{x_n} \) offers an increasingly accurate approximation of the square root of \( Y \) when averaged with \( x_n \). After a sufficient number of iterations (15 iterations in our implementation), \( x_n \) converges to the square root of \( Y \). 





\subsubsection{Preprocessing}\label{sec:preprocess}

Our protocol is designed to be compatible with Message Authentication Codes (MACs) and Beaver triples generation, following approaches similar to those in \cite{keller2020mp}. Our implementation necessitates multiplication triples over \(\mathbb{Z}_{2^k}\), where \(k\) is a parameter of the protocol. For the secure comparison protocol, when \(k=1\), our protocol incorporates an optimized variant of the two-party TinyOT protocol \cite{wang2017global}. For larger choices of \(k\) ( \(k=32\) in this work), we base our costs on the MASCOT protocol \cite{keller2016mascot}, noting that despite MASCOT's communication complexity being in \(O(k^2)\), it still offers the lowest costs for all the table sizes we have considered, particularly with \(k=32\). 

In our analysis, we concentrate primarily on the dynamics of the online phase of the protocol. We omit the details of the offline phase, encompassing the generation and distribution of MACs and Beaver triples, which can be found in \cite{CramerDESX18}.

\section{Threat model}


We assume that two distinct entities are involved in the training: semi-honest and malicious parties:

\textbf{Semi-honest clients: } Semi-honest clients are fundamentally non-malicious, but exhibit an elevated level of curiosity (also known as Honest-but-Curious clients). These clients strictly follow the established protocols of the system but try to infer additional information about other clients from the exchanged messages. 
This behavior, while not explicitly harmful, raises genuine concerns regarding data privacy and confidentiality within the system \cite{pasquini2023security}.

\textbf{Byzantine clients: } The second type of entities are Byzantine clients who are defined by their inherent malevolence. Unlike semi-honest clients, Byzantines display no commitment to the established protocols and are known for their unpredictable malicious actions. Their objective is to disrupt the learning process or introduce biases into the model that is being trained. They may do this by transmitting arbitrary, misleading, or false information to other clients. Furthermore, multiple Byzantine clients may collude to maximize the efficacy of either poisoning or inference attacks. 

\subsection{Problem Statement}

In all collaborative learning environments, an honest client is not interested in acquiring a specific model from another client. Rather, its objective is to obtain an aggregated model update derived from a combination of multiple local client datasets. Having access only to the aggregated model update offers two crucial privacy benefits. Firstly, it prevents malicious attempts targeting client models by masking details specific to individual local models, effectively hindering adversaries from identifying and exploiting precise data sources. Secondly, it diminishes the impact of any single client's input by combining contributions from multiple clients within the aggregated model. This reduction in the prominence of individual contributions serves to obscure their visibility, thus decreasing the likelihood of attacks that target unique datasets.

In FL, a central server aggregates the model updates received from the clients, relaying only the aggregated outcome to those clients. Thus, the server is the only entity with access to each model's (non-aggregated) updates. Conversely, in DL, with a shift to a completely decentralized training, the elimination of the central server increases the attack surface. This increase is due to the need to distribute updates across neighboring clients for on-client aggregation, contrasting sharply with FL. Therefore, in DL, each client, with access to collaborator updates can potentially perform privacy attacks against their collaborators such as inference attacks~\cite{pasquini2023security}.



Moreover, the susceptibility of collaborative learning systems to Byzantine client disruptions is a significant concern. In these systems, a single Byzantine client has the potential to derail the entire learning process by manipulating or corrupting shared models across the network \cite{blanchard2017machine, raynal2023can}. In FL settings, various research studies have been conducted to enable the server to detect and mitigate Byzantine threats. However, in DL, the absence of centralized control within decentralized architectures presents unique challenges. In a decentralized setting, each client in the network must independently address these challenges, which complicates the detection and mitigation of malicious activities. 

In this study, our objective is to develop a DL method that maintains Byzantine robustness against malicious clients without compromising accuracy or efficiency. The Byzantine robustness of our protocol is achieved through a privacy-preserving aggregation rule. This rule prevents clients from accessing other clients' data in plain form; instead, only aggregated values are revealed to them.

\section{\secureDL}

This section presents a detailed explanation of the secure aggregation rule within \secureDL. The approach we take for aggregation is illustrated in Figure \ref{Vector}. 


Given the existence of Byzantine clients within the network, it becomes essential to acknowledge the capability of compromised clients to prevent the convergence of the global model. In light of this, our proposed aggregation rule underlines the significance of evaluating both the magnitude and the directional alignment of local model updates concerning the global model of each client.

Byzantine clients manipulate model update directions, steering the global model away from its intended trajectory. This strategic deviation impedes the model's convergence. The resulting angular disparity between the model updates from these malicious clients and the global model of the receivers of those updates exceeds that of benign clients. In \secureDL, the direction of received model updates is evaluated against the global model of the receiving client using the \textbf{cosine similarity} metric.
This metric is widely recognized for its ability to compute the angular difference between two vectors. The cosine similarity between these two vectors \( \mathbf{w}_a \) and \( \mathbf{w}_b \) is expressed as:
\begin{equation*}
    Cosine(\mathbf{w}_a, \mathbf{w}_b) = \frac{\mathbf{w}_a \cdot \mathbf{w}_b}{\|\mathbf{w}_a\| \|\mathbf{w}_b\|}.
\end{equation*}

In our protocol, we tackle the challenge of computing cosine similarity in a decentralized learning environment while preserving the privacy of the input vectors. Here, direct access to complete model updates is restricted, and clients should only have access to the aggregated model updates, necessitating an innovative approach for accurate similarity measurement. Our solution leverages collaborative client efforts to execute secure vectorized operations, alongside specialized computations for square roots and inversions, to calculate cosine similarities between additive secret-shared model update vectors $[\![\mathbf{w}_a]\!]$ and $[\![\mathbf{w}_b]\!]$.

The computation of cosine similarity within our secure framework involves several complex operations, as detailed in Algorithm \ref{alg:CosineSimilarityIntegratedNorm}. Initially, we utilize the Beaver multiplication protocol for securely computing the dot product between two model updates. This crucial step allows for the multiplication of values without directly revealing them to any party. For the calculation of each model update's norm, we square each vector's elements, sum these squares, and compute the square root of this sum, a process elaborated in Section \ref{sqr} for square root computation. Following this, leveraging the inversion function introduced in Section \ref{inv}, we compute the inversion of the product of these norms. This sequence of operations—dot product computation, norm calculation, and inversion—ensures the accurate and secure calculation of cosine similarity, adhering to the privacy and security requirements outlined in our protocol.

\begin{figure}[t]
\centering
  \includegraphics[width=\linewidth]{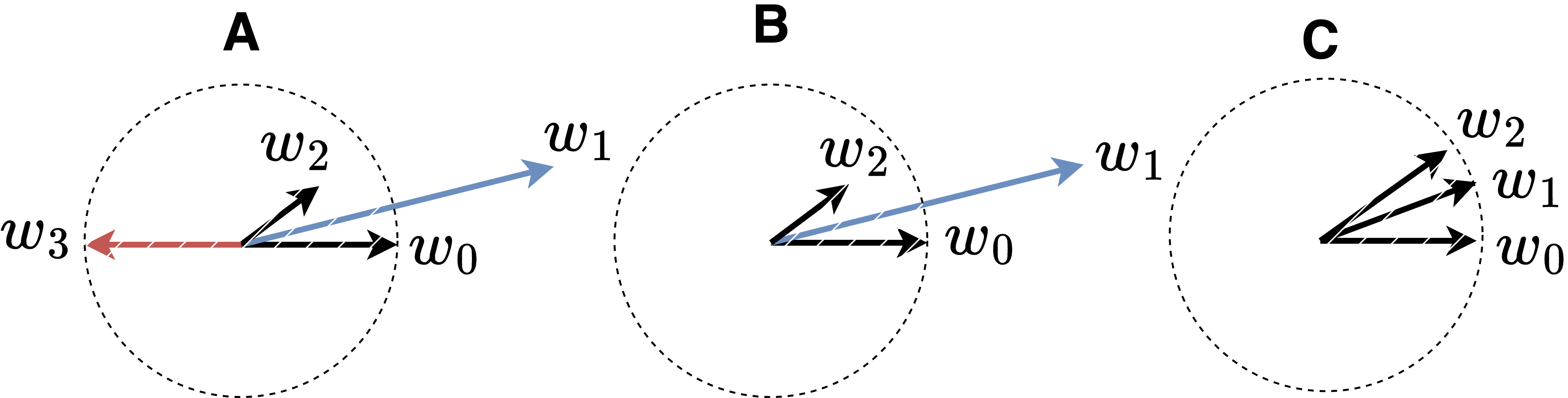}
  \caption{The illustration of \secureDL~aggregation rule. The model update of $Client_{0}$, denoted as $w_0$, obtains the updates from other clients (\(w_1\), \(w_2\), and \(w_3\)). In image \textbf{A}, the protocol discards any model update with a negative cosine value. Image \textbf{B} depicts $Client_0$ normalizing the received models based on its own model's magnitude. Image\textbf{C} shows the final accepted model update after normalization.}
  \label{Vector}
\end{figure}

\begin{algorithm}
\SetKwInOut{Input}{Input}
\SetKwInOut{Output}{Output}
\DontPrintSemicolon

\caption{Cosine Similarity Computation}
\label{alg:CosineSimilarityIntegratedNorm}

\Input{
    Two secret-shared vectors $[\![\mathbf{w}_a]\!]$ and $[\![\mathbf{w}_b]\!]$.
}
\Output{
    The cosine similarity $[\![\text{cosine}]\!]$ between the vectors $[\![\mathbf{w}_a]\!]$ and $[\![\mathbf{w}_b]\!]$.
}
\BlankLine

$[\![\text{dotProduct}]\!] \leftarrow [\![\mathbf{w}_a]\!] \odot [\![\mathbf{w}_b]\!]$ \tcp*{Vector dot product}

\tcp{Compute the norm of vector $\mathbf{w}_a$ using vectorized operations}
$[\![\mathbf{w}_a^2]\!] \leftarrow [\![\mathbf{w}_a]\!] \times [\![\mathbf{w}_a]\!]$ \\
$[\![\text{sum}_a]\!] \leftarrow \sum_{i=1}^{n} [\![w_{ai}^2]\!]$ \\
$[\![\text{norm}_a]\!] \leftarrow \text{ComputeSquareRoot}([\![\text{sum}_a]\!])$ \\

\tcp{Compute the norm of vector $\mathbf{w}_b$ using vectorized operations}
$[\![\mathbf{w}_b^2]\!] \leftarrow [\![\mathbf{w}_b]\!] \times [\![\mathbf{w}_b]\!]$ \\
$[\![\text{sum}_b]\!] \leftarrow \sum_{i=1}^{n} [\![w_{bi}^2]\!]$ \\
$[\![\text{norm}_b]\!] \leftarrow \text{ComputeSquareRoot}([\![\text{sum}_b]\!])$ \\

$[\![\text{denom}]\!] \leftarrow \text{ComputeInverse}([\![\text{norm}_a]\!] \times [\![\text{norm}_b]\!])$ \tcp*{Element-wise multiplication for norms and then inversion}
$[\![\text{cosine}]\!] \leftarrow [\![\text{dotProduct}]\!] \times [\![\text{denom}]\!]$ \\

\Return{$[\![\text{cosine}]\!]$}
\end{algorithm}

This streamlined computation facilitates the efficient isolation of model updates that fall below a certain similarity threshold, thereby enhancing the robustness of the decentralized learning process. It is important to highlight that the results of the cosine similarity computation remain in a secret-shared form. This necessitates a method for comparing these results with a threshold without reconstructing the secret-shared value. To address this, we employ a secure comparison protocol, as described in Section~\ref{seccom}, which securely enables the decision to accept or discard a model update. This protocol allows for the comparison of secret-shared cosine similarity scores against a predefined threshold in a manner that upholds the privacy and integrity of the data. Our method not only meets the secure framework's privacy requirements but also facilitates a more effective evaluation of model update contributions, ensuring the inclusion of only those that are closely aligned with the overall model direction, without having direct access to the model updates.


Moreover, Byzantine adversaries have the ability to significantly magnify the magnitudes of local model updates in compromised clients, thereby exerting a substantial impact on the receiver's global model update. To mitigate this, our approach involves the \textbf{normalization} of each received model update, aligning it with the magnitude of the receiver's model update. This process involves adjusting the scale of received model updates to ensure their magnitudes match with that of the receiver client's model update within the vector space. To achieve this normalization effectively, we employ L2-normalization techniques.

The L2-normalization of model update \( \mathbf{w}_a \) with respect to receiver \( \mathbf{w}_b \) can be mathematically expressed as:
\[ Norm(\mathbf{w}_a, \mathbf{w}_b) = \mathbf{w}_a \times \frac{\lVert \mathbf{w}_b \rVert_2}{\lVert \mathbf{w}_a \rVert_2} \]
where \( \lVert \mathbf{w}_b \rVert_2 \) and \( \lVert \mathbf{w}_a \rVert_2 \) represent the L2 norms (or Euclidean norms) of the model update vectors \( \mathbf{w}_b \) and \( \mathbf{w}_a \), respectively. 
This normalization method balances the effect of each local model update on the overall global model. It lessens the influence of larger updates and increases the impact of smaller ones. By giving more weight to these smaller updates, we can reduce the effect of harmful updates, leading to a stronger global model.

In our secure computation framework, we achieve privacy-preserving L2 normalization using accurate approximations. In the first step, as outlined in the Algorithm \ref{ComputeL2Normalization}, clients jointly and securely calculate the norms of the reference model update and received model update vectors. This process begins with squaring each vector component, summing these squares, and then securely computing the square root to obtain the vectors' magnitudes in Euclidean space. The procedure emphasizes the collaborative effort and complexity inherent in our framework, highlighting the advanced protocols employed to ensure privacy.

To normalize one vector relative to another, we use the inversion protocol to compute the inverse of one vector's norm. This step is crucial for adjusting the magnitude of the received model update to match that of the receiver clients, maintaining consistency with the client's global model without revealing private information. The final normalization step involves scaling the original vector by a computed ratio, ensuring the resulting vector maintains its direction. At the same time, its magnitude is adjusted relative to the reference vector.

\begin{algorithm}
\SetKwInOut{Input}{Input}
\SetKwInOut{Output}{Output}
\DontPrintSemicolon
\caption{ComputeL2Normalization}
\label{ComputeL2Normalization}

\Input{
    Secret-shared vectors \( [\![\mathbf{w}_a]\!] \) and \( [\![\mathbf{w}_b]\!] \).
}
\Output{
    L2-normalized vector \( [\![\mathbf{w}_a']\!] \) relative to \( [\![\mathbf{w}_b]\!] \).
}

\BlankLine
\tcp{Vector norm computation}
\( [\![\mathbf{w}_a^2]\!] \leftarrow [\![\mathbf{w}_a]\!] \times [\![\mathbf{w}_a]\!] \) \;
\( [\![\text{sum}_a]\!] \leftarrow \sum_{i=1}^{n} [\![w_{ai}^2]\!] \) \;
\( [\![\lVert \mathbf{w}_a \rVert_2]\!] \leftarrow \text{ComputeSquareRoot}([\![\text{sum}_a]\!]) \) \;

\tcp{Vector norm computation}
\( [\![\mathbf{w}_b^2]\!] \leftarrow [\![\mathbf{w}_b]\!] \times [\![\mathbf{w}_b]\!] \) \;
\( [\![\text{sum}_b]\!] \leftarrow \sum_{i=1}^{n} [\![w_{bi}^2]\!] \) \;
\( [\![\lVert \mathbf{w}_b \rVert_2]\!] \leftarrow \text{ComputeSquareRoot}([\![\text{sum}_b]\!]) \) \;

\( [\![\lVert \mathbf{w}_a \rVert_2^{-1}]\!] \) \(\leftarrow\) ComputeInverse(\( [\![\lVert \mathbf{w}_a \rVert_2]\!] \))\;

\tcp{Compute the ratio for normalization}
\( [\![r]\!] \) \(\leftarrow\) \( [\![\lVert \mathbf{w}_b \rVert_2]\!] \) \(\times\) \( [\![\lVert \mathbf{w}_a \rVert_2^{-1}]\!] \)\;

\tcp{Normalize the vector \( [\![\mathbf{w}_a]\!] \) by the computed ratio}
\( [\![\mathbf{w}_a']\!] \) \(\leftarrow\) \( [\![\mathbf{w}_a]\!] \) \(\times\) \( [\![r]\!] \)\;

\Return{\( [\![\mathbf{w}_a']\!] \)}\;
\end{algorithm}

The complete \secureDL~protocol is illustrated in Algorithm \ref{alg:SecureDL} where the concluding step involves computing the mean of the normalized and global model of the receiver clients. This entire process is repetitively applied across $R_g$ global iterations to finalize the training process. A local learning rate $\beta$, and a global learning rate $\alpha$ defines the relative size of the updates for local update computation, and global aggregation respectively. Rejection of updates is controlled by a configurable threshold $tau$.

\begin{algorithm}[htb!]
\SetKwInOut{Input}{Input}
\SetKwInOut{Output}{Output}
\DontPrintSemicolon

\caption{\secureDL}
\label{alg:SecureDL}

\Input{
    $n$ clients $C_0, C_1, \dots, C_{n-1}$ with local training datasets $D_i$ for $i = 0, 1, \dots, n-1$ \\
    Number of global iterations: $R_g$ \\
    Threshold: $0 < \tau < 1$ \\
}
\Output{Global model $\mathbf{w}$ trained on all clients' datasets for each client}

\textbf{Initialization step:}\;
All clients start from the same initial model $w^0$.\;
\For{$r = 1,2, \dots, R_g$}{

    \textbf{Local training step:}\;
    Training local models\;
    \For{every client $i$ in $\{C_0, C_1, \dots, C_{n-1}\}$}{
        $w_i = \text{LocalUpdate} (w^r, D_i, \beta)$\;
        Send model update shares $\llbracket \mathbf{w}_i \rrbracket$ to all clients\;
    }
    \textbf{Aggregation step:}\;
    Updating their global model via aggregating the local model updates\;
    \For{every client $i$ in $\{C_0, C_1, \dots, C_{n-1}\}$}{
        \For{every client $j$ in $\{C_0, C_1, \dots, C_{n-1}\}$}{
            \If{$i \neq j$}{
                $\llbracket \boldsymbol{cosine}_{ij} \rrbracket \leftarrow \text{Cosine}(\llbracket \mathbf{w}_{i} \rrbracket, \llbracket \mathbf{w}_{j} \rrbracket) = \frac{\langle \llbracket \mathbf{w}_i \rrbracket, \llbracket \mathbf{w}_{j} \rrbracket \rangle}{\|\llbracket \mathbf{w}_i \rrbracket\| \cdot \|\llbracket \mathbf{w}_{j} \rrbracket\|}$
            }

        \If{$\text{SecureComparison}(\llbracket \boldsymbol{cosine_{ij}} \rrbracket , \tau)$}{
            \tcp{Returns True if $cosine_{ij} < \tau$}
            $\llbracket \overline{w_j} \rrbracket = 0$\;
        }
        \Else{
            $\llbracket \overline{\mathbf{w}_j} \rrbracket =  Normalization(\llbracket \mathbf{w}_{i} \rrbracket, \llbracket \mathbf{w}_{j} \rrbracket) = \frac{\|\llbracket \mathbf{w}_{j} \rrbracket\|}{\|\llbracket \mathbf{w}_i \rrbracket\|} \cdot \llbracket \mathbf{w}_i \rrbracket$\;
        }}
    \textbf{Model updates aggregation:}\;
{
    $\llbracket \hat{\mathbf{w}} \rrbracket = \frac{1}{n} \left( \sum\limits_{\substack{j=0 \\ j \neq i}}^{n-1} \llbracket \overline{\mathbf{w}_j} \rrbracket + \llbracket \mathbf{w}_{i} \rrbracket \right)$\;
}

    }
}
\end{algorithm}

\subsection{Convergence} 
\label{sub:convergence}
In this section, we demonstrate a proof that \secureDL~converges. Specifically, for an arbitrary number of malicious updates, the difference between the model learnt from \secureDL~filtered updates and the optimal model under no attacks is bounded. 
Our proof closely follows that presented in FLtrust \cite{cao2020fltrust}, as beyond the secure multi-party computation, the rejection of malicious updates on shares of updates results in the overall computation of cosine similarity with a known `good update' and each received client's update. Collectively, the clients find a model $w$ from a global dataset $D$ defined previously in Equation~\ref{eqn:parameters}, where $D = \cup^{n}_{i=1}D_i$ is composed of the joint training datasets from each decentralized client on which the expected loss function $f(D, \mathbf{w})$ is minimized. 
We define $\mathbf{w}^*$ as the optimal model for a given data distribution $\xi$.


We present several assumptions stated in similar cosine similarity based detection schemes~\cite{cao2020fltrust, miao2022privacy} with which \secureDL~adheres to satisfy Theorem 1:
\begin{assumption}
    The expected loss function $f(D, \mathbf{w})$ is $\mu$-strongly convex and differentiable over $\mathbb{R}^d$ with L-Lipschitz continuous gradient.

    This entails that for $\mathbf{w}, \mathbf{\widehat{w}} \in \mathbb{R}^d$:
    $$
    \nabla f(D, \mathbf{\hat{w}}) \ge \nabla f(D, \mathbf{w}) + \langle \nabla f(D, \mathbf{w}), \mathbf{\hat{w}} - \mathbf{w} \rangle) + \frac{\mu}{2}\| \mathbf{\hat{w}} - \mathbf{w} \|^2,
    $$
    $$
    \|\nabla f(D, \mathbf{w}) - \nabla f(D, \mathbf{\widehat{w}})\| \leq L'\|\mathbf{w} - \mathbf{\widehat{w}}\|,
    $$
    where $\nabla$ represents the gradient and $\|\cdot\|$ represents the L2 norm. and $\langle\cdot,\cdot\rangle$ represents the inner product of two vectors. 
    Further, the empirical loss function $f(D, \mathbf{w})$ is L1-Lipschitz probabilistic.
\end{assumption}

\begin{assumption}
    The gradient of the empirical loss function $\nabla f(D, \mathbf{w})$, with respect to $\mathbf{w}$, is bounded. Additionally, the gradient difference for any $\mathbf{w} \in \mathbb{R}^d$ is bounded.
\end{assumption}

\begin{assumption}
    Each client's local training dataset $D_i:=\{\xi_k\}_{k=1}^n$ is independently drawn from the distribution $\xi$.
\end{assumption}


\begin{assumption}
    Approximations in the secure comparison and secure aggregation that result in minute variations in update values are significantly smaller than expected noise in local dataset variation, thereby negligibly impacting learning.
\end{assumption}

\begin{theorem}
\label{convergenceproof}
    Suppose Assumption 1-4 hold and \secureDL~uses $R_g=1$ and $\beta=1$. For an arbitrary number of malicious clients, the difference between the global model learnt by \secureDL~and the optimal global model $\mathbf{w^*}$ under no attacks is bounded. Formally, we have the following with probability at least $1-\delta$:
    $$\|\mathbf{w^t} - \mathbf{w^*}\| \leq (1-\rho)^{t} \|\mathbf{w^0} - \mathbf{w^*}\| + 12 \alpha \Delta_1 / \rho,$$
    where $\mathbf{w^t}$ is the model at the $t^{th}$ iteration, $\alpha$ is the global learning rate, and both $\rho$, $\Delta_1$ as constants defined in \cite{cao2020fltrust}.
    
    Notably when $|1-\rho| < 1$, with a large number of iterations, $\lim_{t\rightarrow\infty} \|\mathbf{w^t} - \mathbf{w^*}\| \leq 12\alpha\Delta_1 / \rho$, and thus bounded.
    \label{thm:convergence}
\end{theorem}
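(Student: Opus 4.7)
The plan is to adapt the FLTrust convergence argument of Cao et al.~\cite{cao2020fltrust} to the decentralized MPC setting of \secureDL, exploiting the fact that although the updates live in secret-shared form, the mathematical operations performed (cosine thresholding, L2-renormalization, and averaging) are identical to their plaintext counterparts up to the negligible approximation error covered by Assumption 4. The core idea is to establish that one global iteration of \secureDL~acts as a contraction on the distance $\|\mathbf{w}^t - \mathbf{w}^*\|$, yielding a geometric decay factor plus a residual bias term reflecting the worst-case influence of surviving Byzantine updates.

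First, I would recast the per-iteration update of Algorithm~\ref{alg:SecureDL} in the form $\mathbf{w}^{t+1} = \mathbf{w}^t - \alpha g^t$, where $g^t$ is an ``effective gradient'' obtained by averaging the normalized and filtered local updates with the receiver's own update. By Assumption~1 (strong convexity with parameter $\mu$ and $L'$-Lipschitz gradient), the ideal iteration using $\nabla f(D,\mathbf{w}^t)$ would be a contraction of factor $(1-\rho)$ for an appropriate $\rho$ depending on $\mu, L', \alpha$; this is the standard smooth/strongly convex analysis. Next I would bound $\|g^t - \nabla f(D,\mathbf{w}^t)\|$ by three contributions: the statistical deviation of the empirical gradients from the population gradient (using Assumption~3 and a uniform concentration argument, giving a high-probability bound of the form $\Delta_1$), the bias from adversarial updates that pass the cosine test, and the MPC approximation error controlled by Assumption~4. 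The triangle inequality then gives
\begin{equation*}
\|\mathbf{w}^{t+1}-\mathbf{w}^*\| \leq (1-\rho)\|\mathbf{w}^t - \mathbf{w}^*\| + 12\alpha\Delta_1,
\end{equation*}
and unrolling this recursion with the geometric-series sum $\sum_{k=0}^{t-1}(1-\rho)^k \leq 1/\rho$ yields the stated bound.

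The key technical lemma is to bound the Byzantine contribution to $g^t$. Because \secureDL~rejects any update whose cosine similarity with the receiver's own model falls below $\tau$ and then rescales accepted updates to match the receiver's norm, a surviving malicious update has angular deviation at most $\arccos(\tau)$ from the receiver's direction and magnitude exactly $\|\mathbf{w}^t_i\|$. Combined with the bounded-gradient portion of Assumption~2, this caps the norm of the bias of each malicious contribution by a quantity of the same order as the honest statistical noise, so the $12\Delta_1$ constant absorbs the Byzantine term uniformly in the number of adversaries. This is exactly the mechanism by which FLTrust tolerates an arbitrary fraction of malicious clients, and it transfers to our setting because the thresholding and normalization operations are performed faithfully over shares.

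The hardest part will be justifying the use of each client's own local model as the trust anchor, as opposed to FLTrust's server-side root dataset. Since every client is initialized from the common $\mathbf{w}^0$ and Assumption~4 ensures that MPC-induced perturbations remain below the statistical noise floor, I would proceed by induction on $t$: assuming $\|\mathbf{w}^t_i - \mathbf{w}^*\|$ is bounded for all honest clients, the receiver's own update points sufficiently close to the true descent direction that the cosine test still filters out updates with large angular deviation from $\nabla f(D,\mathbf{w}^t)$. Verifying the inductive step requires careful propagation of the approximation error from the $15$-iteration Newton–Raphson square root and inversion routines of Sections~\ref{sqr} and~\ref{inv}; this is where Assumption~4 does the heavy lifting, and I expect it to be the most delicate aspect of the write-up.
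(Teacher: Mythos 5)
Your proposal follows essentially the same route as the paper: both reduce the claim to FLTrust's Lemmas 1--3 and unroll the resulting contraction-plus-bias recursion, with Assumption~4 absorbing the MPC approximation error. The paper's entire written proof is the single observation that \secureDL~replaces FLTrust's ReLU-weighted scaling $\varphi_i = \mathrm{ReLU}(cosine_i)/\sum_{j\in\mathcal{S}}\mathrm{ReLU}(cosine_j)$ with a uniform step-function weighting $\varphi_i = \mathrm{Step}(cosine_i,\tau)/|\mathcal{S}|$ that still satisfies $0<\varphi_i<1$ and $\sum_{j\in\mathcal{S}}\varphi_j=1$ -- which is precisely the content of your ``key technical lemma'' on surviving updates -- while your discussion of the client-local trust anchor is more careful than the paper, which silently treats the receiver's own update as the known good reference playing the role of FLTrust's root gradient.
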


\begin{proof}[Proof of Theorem \ref{convergenceproof}]
With the assumptions mentioned in Section \ref{sub:convergence} and Theorem~\ref{thm:convergence}, adopted for \textbf{SecureDL} from \cite{cao2020fltrust}, we can demonstrate \secureDL's model under attack and the optimal model is bounded. We refer the reader to \cite{cao2020fltrust} for the full proof of Theorem~\ref{thm:convergence} through Lemma 1, 2 and 3. We note that when considering this proof for \secureDL's update rejection, the aggregation criteria mirrors that of FLtrust albeit with variation to the gradient scaling parameter $\varphi$ in Lemma 1. Specifically, FLtrust scales each normalized gradient $\bar{g_i}$ by the cosine similarity $cosine_i$ for a given client update $i$ in set $\mathcal{S}$, whose $cosine_i$ is positive, As such FLtrust's scaling parameter is $\varphi_i = \frac{ReLu(cosine_i)}{\sum_{j\in\mathcal{S}} ReLu(cosine_j)}$, where $ReLu$ is the rectified linear unit activation function. 
On the other hand, in \secureDL, the gradient is not scaled by $cosine_i$, instead retaining it's normalized magnitude of 1, better represented by the binary $step$ activation above a given threshold $\tau$, as such, $\varphi_i = \frac{Step(cosine_i, \tau)}{|\mathcal{S}|}$, where $0< \tau < 1$, thereby satisfying $0 < \varphi_i < 1$, and $\displaystyle \sum_{j\in\mathcal{S}}\varphi=1$. With this variation on Lemma 1, \secureDL~satisfies Lemma 
1, 2, 3, and Theorem~\ref{thm:convergence} remains valid for \secureDL, and thus the model obtained by \secureDL~shall converge. While we refer the reader to \cite{cao2020fltrust} for the full proof of Theorem, we shall in the next section empirically demonstrate SecureDL's convergence over 3 distinct datasets.
\end{proof}






\subsection{Privacy analysis}

In this section, we show that \secureDL~protocol achieves privacy-preserving aggregation. We provide the privacy analysis using the simulation based proof technique~\cite{goldreich1998secure,lindell2017simulate}. In simulation based proof, the privacy is guaranteed by showing that there is a polynomial-time simulator that simulates the view of the adversary such that the adversary cannot distinguish the simulated view from the real one. 
The (computational) indistinguishability ensures that the adversary does not learn any information about the honest clients since even the simulator does not have any private information belongs to the honest parties.
Below, we present an informal statement of this principle; the formal analysis is given in Appendix \ref{sec:privacyAnalysis}.

\begin{theorem}[Privacy w.r.t Semi-Honest Behavior]
\label{securedlproof}
Consider an \(n\)-party secure multi-party computation protocol \(\Pi\), designed to compute a function \(f\) over inputs \(\bar{x}:=\left(x_1, \ldots, x_n\right)\) utilizing the \secureDL~algorithm. The protocol involves \(n\) parties, \(P_1, P_2, \ldots, P_n\), each holding an additive share of their private inputs. The protocol \(\Pi\) is said to privately compute the function \(f(\bar{x})\) in the presence of at least one honest party and potentially colluding parties, if for every subset of honest parties \(I \subseteq [n]\), with \(I=\left\{i_1, \ldots, i_t\right\}\), there exists a probabilistic polynomial-time simulator \(S_I\). The output distribution of \(S_I\) and the actual view of the parties in \(I\) during the protocol's execution, \(\operatorname{View}_I^{\Pi}(\bar{x})\), are required to be computationally indistinguishable (\(\stackrel{\text{perf}}{\equiv}\)) across all possible inputs \(\bar{x}\), where:

\[
\{S_I(\bar{x}, f_I(\bar{x}))\}_{\bar{x}} \stackrel{\text{perf}}{\equiv} \{\operatorname{View}_I^{\Pi}(\bar{x})\}_{\bar{x}}.
\]

where \(f_I(\bar{x})\) denotes the subsequence of function outputs relevant to the subset \(I\), specifically \(f_{i_1}(\bar{x}), \cdots, f_{i_t}(\bar{x})\), and \(\operatorname{View}_I^{\Pi}(\bar{x})\) represents the collective view of the subset \(I\) during the execution of \(\Pi\), encompassing all information accessible to the parties in \(I\).
\end{theorem}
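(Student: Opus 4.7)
The plan is to establish privacy via the standard simulation-based hybrid argument, exploiting the fact that \secureDL~is obtained by composing sub-protocols whose semi-honest security is already known. I would fix an arbitrary subset $I \subseteq [n]$ of corrupted parties (with at least one honest party, i.e. $|I| \le n-1$) and construct a probabilistic polynomial-time simulator $S_I$ that, given only $\{x_i\}_{i \in I}$ and the output slice $f_I(\bar{x})$, produces a transcript computationally indistinguishable from $\operatorname{View}_I^{\Pi}(\bar{x})$. The starting observation is that additive $n$-out-of-$n$ shares of any value held by an honest party are uniformly distributed from the point of view of any strict subset of the parties; therefore $S_I$ can sample uniform random shares for every message that originates as an "initial share" of an honest party's input (e.g., the entries of $[\![\mathbf{w}_i]\!]$ for honest $i$).

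The next step is to walk through Algorithm \ref{alg:SecureDL} and replace each cryptographic sub-routine with its simulator in a short sequence of hybrids. In $H_0$ the execution is real. In $H_1$, honest parties' input sharings are replaced by uniform shares, which is a perfect change. In $H_2$, every Beaver-triple multiplication of Section \ref{mul_sec} is replaced by the multiplication simulator: since $a,b$ in each triple are uniformly random, the openings $\delta = x-a$ and $\epsilon = y-b$ broadcast during multiplication are uniform and can be sampled independently of $x,y$. In $H_3$, the inversion and square-root subroutines of Sections \ref{inv} and \ref{sqr} are replaced; because both are implemented purely by local additions and Beaver multiplications of secret-shared values (no intermediate openings), their simulation reduces entirely to $H_2$. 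In $H_4$, the secure comparison of Section \ref{seccom} is replaced by the Rabbit simulator of Makri et al., whose semi-honest security in the dishonest-majority setting is already proved in their paper. Finally, in $H_5$, the opening of the final aggregate $[\![\hat{\mathbf{w}}]\!]$ is simulated using the output $f_I(\bar{x})$: for each corrupted party, the simulator fixes one honest share so that all shares reconstruct to the known aggregate, sampling the remaining honest shares uniformly. By sequential composition, $H_0$ and $H_5$ are indistinguishable, and $H_5$ is exactly $S_I$.

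The main obstacle, and the step that requires the most care, is arguing that no intermediate plaintext leaks during the iterative Newton--Raphson inversion and square-root computations used inside cosine similarity and $L_2$ normalization. The key point to verify is that every iterate $B_s$, $M_s$ remains secret-shared throughout, and that the only values ever revealed in clear during these loops are the Beaver openings $\delta, \epsilon$, which were already argued uniform in $H_2$. A related subtlety is the multiplexing in lines controlled by the secure comparison: the accept/reject bit $\llbracket \mathrm{step}(\mathrm{cosine}_{ij}, \tau) \rrbracket$ must never be opened, since doing so would leak a per-neighbor decision; instead, as written in Algorithm \ref{alg:SecureDL}, this bit is used arithmetically to zero out rejected updates inside the shared aggregate, so that only the final sum is reconstructed. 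Verifying that the pseudocode indeed admits such a share-level implementation (rather than an \texttt{if}/\texttt{else} on opened values) is the one place where the proof connects back to the concrete protocol design.

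To close the argument I would invoke the sequential composition theorem for semi-honest MPC (see e.g. \cite{goldreich1998secure}) to collapse the hybrid chain, and observe that the only plaintext ever revealed to any party across the whole protocol is the aggregated model $\hat{\mathbf{w}}$, which is precisely $f_I(\bar{x})$ restricted to the corrupted parties' outputs. Consequently, $\{S_I(\bar{x}, f_I(\bar{x}))\}_{\bar{x}} \stackrel{\text{perf}}{\equiv} \{\operatorname{View}_I^{\Pi}(\bar{x})\}_{\bar{x}}$, establishing the stated privacy guarantee. The formal bookkeeping of the hybrids, together with the explicit description of each sub-simulator, is deferred to Appendix \ref{sec:privacyAnalysis}.
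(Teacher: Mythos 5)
Your proposal follows essentially the same route as the paper's proof: decompose \secureDL~into the cosine-similarity, secure-comparison, and L2-normalization sub-protocols (themselves built from Beaver multiplication, inversion, square root, and norm computation), invoke the simulator for each, and compose — your hybrids $H_0$–$H_5$ are just an explicit bookkeeping of what the paper does by directly concatenating the sub-simulators' messages and arguing uniformity over $\mathbb{Z}_{2^k}$. The two places where you go beyond the paper are worth noting as genuine refinements rather than deviations: the paper's simulator never actually consumes $f_I(\bar{x})$ to simulate the opening of the final aggregate, and Algorithm~\ref{alg:SecureDL} as written branches on the opened comparison result, so your insistence that the accept/reject bit be consumed on shares (and your flagging of the final-output opening) tightens exactly the steps the paper glosses over.
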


\secureDL~protocol achieves the privacy properties (under the semi-honest model) by leveraging information theoretical security of the secret sharing scheme in the aggregation method. It allows the construction of a simulation that accurately mimics the adversary's view without necessitating access to the honest parties' inputs. This ensures that the protocol can simulate the adversary's perspective in a manner that is computationally indistinguishable from the actual computation, thereby maintaining the privacy of the inputs against passive adversaries.

In the \secureDL~protocol, we utilize several operations include multiplication, addition, inversion, square root calculation, vector norm computation, cosine similarity assessment, and L2 normalization. For each of these functions, we have provided simulation-based proofs individually. The utilization of randomness inherent in the secret sharing mechanism enables the secure and consistent simulation of each component's composition within the protocol. This critical feature ensures that the output shares generated by each function—despite being reused as inputs for subsequent operations—remain indistinguishable to potential adversaries. Thus, the privacy and confidentiality of data through the computation process are maintained, effectively protecting against unauthorized disclosure and preserving the privacy of the computational outcomes. Finally, we are able to provide a simulator that simulates the view of an adversary (who can control up to $n-1$ clients) for the whole protocol, and prove Theorem~\ref{securedlproof}.

\textbf{Strengthening from semi-honest to Byzantine clients:} Our implemented MPC protocol is designed to provide security against semi-honest adversaries within a dishonest majority setting over the ring $\mathbb{Z}_{2^k}$, creating a robust framework for ensuring the privacy of computations against passive attackers. 
As mentioned in Section~\ref{sec:preprocess}, by using information-theoretic secure and homomorphic Message Authentication Codes (MACs) given in~\cite{CramerDESX18}, we can prevent Byzantine adversaries to fabricate inconsistent information about their (preprocessing) data.

\section{Evaluation}

We conducted a set of  experiments to demonstrate the robustness of \secureDL~in the case of Byzantine adversaries and to measure the overhead introduced by the changes to the training process. In what follows, we present the experimental details and discuss the findings.

\subsection{Experimental Setup}

We implement SecureDL in Python leveraging the Pytorch framework~\cite{paszke2019pytorch} for model training. All experiments are conducted on a compute cluster with a Intel Xeon Platinum 8358 cpu, A100 GPU, and 128GB RAM networked locally. 

\subsubsection{Datasets} We evaluated our protocol using four popular deep learning datasets: MNIST~\cite{lecun1998mnist}, Fashion-MNIST~\cite{xiao2017fashion}, SVHN~\cite{netzer2011reading} and CIFAR-10~\cite{krizhevsky2009learning} with independent and identically distributed (IID) partitioning. We randomly split the training set into subsets of a desired size (mostly evenly) and allocate each subset to each client. The MNIST dataset contains 60,000 training and 10,000 testing grayscale images of handwritten digits, each with a resolution of \(28 \times 28\) pixels. It is evenly distributed across ten classes, with each class represented by 6,000 images in the training set and 1,000 in the test set. Fashion-MNIST is a 10-class fashion image classification task, which has a predefined training set of 60,000 fashion images and a testing set of 10,000 fashion images. The SVHN dataset, sourced from Google Street View house numbers, includes 99,289 color images across ten classes. It comprises of 73,257 training and 26,032 testing images, all standardized to a resolution of \(28 \times 28\) pixels. CIFAR-10, a collection of color images, offers a diverse challenge with its 50,000 training and 10,000 testing examples spread across ten distinct classes. Each example in this dataset is a color image representing one of these classes.

\subsubsection{Model}

We implemented two distinct model architectures to cater to the specific needs of our datasets: a CNN for CIFAR-10 and SVHN, and an MLP for Fashion MNIST and MNIST. Detailed in Table \ref{model_arc}, the CNN architecture begins with a \(3 \times 32 \times 32\) input layer, progressing through several convolutional layers with Group Normalization (GN) and ReLU activations, interspersed with max pooling and dropout rates ranging from 0.2 to 0.5, and concludes with a fully connected output layer. This model is optimized with a learning rate of 0.002 and a batch size of 128, designed to efficiently process and learn from the image data. For the simpler Fashion MNIST and MNIST datasets, the two-layer MLP, employing sigmoid activations, is optimized with a learning rate of 0.01 and the same batch size, providing a tailored solution that addresses the unique challenges of each dataset while ensuring effective learning outcomes.

\begin{table}
\caption{Model architecture for CIFAR10 dataset}
\label{model_arc}
\centering
\resizebox{1.0\columnwidth}{!}{%
\begin{tabular}{|c|c|}
\hline Layer Type & Size / Type \\
\hline Input & $3 \times 32 \times 32$ \\
\hline Conv+ GroupNorm + ReLU & $3 \times 3 \times 32$ / GN(32) \\
\hline Conv + GroupNorm + ReLU + Max Pool + Dropout(0.2) & $3 \times 3 \times 32$ / GN(32) \\
\hline Conv + GroupNorm + ReLU & $3 \times 3 \times 64$ / GN(64) \\
\hline Conv + GroupNorm + ReLU + Max Pool + Dropout(0.3) & $3 \times 3 \times 64$ / GN(64) \\
\hline Conv + GroupNorm + ReLU & $3 \times 3 \times 128$ / GN(128) \\
\hline Conv + GroupNorm + ReLU + Max Pool + Dropout(0.4) & $3 \times 3 \times 128$ / GN(128) \\
\hline Fully Connected + GroupNorm + ReLU + Dropout(0.5) & 128 / GN(128) \\
\hline Fully Connected & 10 \\
\hline
\end{tabular}
}
\end{table}

\subsection{Attacks}

Our experimental evaluation includes two primary categories of poisoning strategies: data poisoning and local model manipulation. Within the realm of data poisoning, our focus is on the widely recognized label flipping attack \cite{jere2020taxonomy}. In the context of local model manipulation, we explore various approaches, including the sign flipping attack \cite{karimireddy2021learning}, Gaussian attack \cite{li2021byzantine}, and the scaling attack \cite{bagdasaryan2020backdoor}. We also consider scenarios that involve the mix of these (different) attacks to understand their cumulative impact.

\textbf{Sign-flipping attack (SF): }
In order to implement this attack, we apply a reversal by multiplying each update from a malicious client with \(-1\), effectively flipping the update vector's direction. Thus, if the original update vector is \( \hat{w}_i^k \), the transmitted update becomes \( w_i^k = -1 \times \hat{w}_i^k \). This attack disrupts the gradient descent process and emphasizes the need for robust detection/protection strategies against such manipulations.

\textbf{Gaussian attack (noise): }
Referred also as the noise attack, a Byzantine client subtly interferes with the learning process in this attack by changing its updates, \( w_i^k \), to follow a Gaussian distribution. For our experiments, the Gaussian distribution is characterized by a mean and variance of $0.1$, simulating noise to evaluate the learning algorithm's resilience against these statistically designed disruptions.

\textbf{Scaling attack (SA): }
In this adversarial approach, a Byzantine client dramatically increases the magnitude of its model's weights prior to aggregation. This action distorts the global model by injecting excessively large weights, which, when averaged with the updates from benign clients, can result in model divergence or significant skew. 

\textbf{Label Flipping attack (LF): }
This attack assumes that attackers have comprehensive control over the training process, allowing them to modify the updates in real-time. Therefore, attackers might resort to altering the training dataset instead of the update parameters. In the LF strategy, the labels of training samples are inverted: a label \( l \) is converted to \( L - l - 1 \), where \( L \) denotes the total number of classes in the classification scenario, and \( l \) is any integer from 0 to \( L-1 \).

\textbf{Combination Attack: }
The Combination Attack merges multiple adversarial techniques, such as sign-flipping, Gaussian, scaling, and label flipping attacks, to create a disruptive effect on the learning process. In this strategy, a Byzantine client may, for instance, simultaneously reverse and scale its update vector, introduce Gaussian noise, and alter training dataset labels. This multifaceted approach aims to test the model's resilience and the effectiveness of the defense mechanisms by combining the strengths of each client attack type.


\subsection{Attacks on Robust Aggregation Schemes}

In this section, we delve into the comparative analysis of \secureDL~against decentralized aggregation schemes without privacy protection, as detailed in Table \ref{tab:main}. We focus on evaluating the accuracy of all involved schemes across three diverse datasets CIFAR-10, SVHN and Fashion-MNIST. This investigation aims to highlight the effectiveness of \secureDL~in maintaining high levels of accuracy even in the presence of Byzantine attacks. 

\newcommand{\theircolor}{\cellcolor{blue!25}}
\newcommand{\ourcolor}{\cellcolor{gray!60}}

\begin{table}[htb!]
\centering
\caption{Accuracy comparison of existing decentralized robust aggregation schemes against \secureDL~under various attacks over SVHN, CIFAR-10 and Fashion-MNIST datasets. }
\label{tab:main}

\begin{subtable}{\linewidth}
\caption{CIFAR-10 dataset}
\label{tab:CIFAR10}
\centering
\setlength{\tabcolsep}{4pt} 
\resizebox{0.8\columnwidth}{!}{%
\begin{tabular}{|c|cccccc|}
\hline
\multirow{2}{*}{\# Client} &
  \multicolumn{6}{c|}{\multirow{2}{*}{N = 10}} \\
                           & \multicolumn{6}{c|}{}      \\ \hline
\# Byzantine &
  \multicolumn{1}{c|}{0} &
  \multicolumn{1}{c|}{2} &
  \multicolumn{1}{c|}{2} &
  \multicolumn{1}{c|}{2} &
  \multicolumn{1}{c|}{2} &
  4 \\ \hline
Attack type &
  \multicolumn{1}{c|}{W/O} &
  \multicolumn{1}{c|}{SF} &
  \multicolumn{1}{c|}{SA} &
  \multicolumn{1}{c|}{Noise} &
  \multicolumn{1}{c|}{LF} &
  Combi \\ \hline
Mean &
  \multicolumn{1}{c|}{\theircolor 84.21} &
  \multicolumn{1}{c|}{53.83} &
  \multicolumn{1}{c|}{13.43} &
  \multicolumn{1}{c|}{16.32} &
  \multicolumn{1}{c|}{60.27} &
  10.16 \\ \cline{1-1}
DKrum &
  \multicolumn{1}{c|}{82.20} &
  \multicolumn{1}{c|}{73.55} &
  \multicolumn{1}{c|}{74.3} &
  \multicolumn{1}{c|}{73.32} &
  \multicolumn{1}{c|}{72.95} &
  73.52 \\ \cline{1-1}
BRIDGE   & \multicolumn{1}{c|}{83.28} & \multicolumn{1}{c|}{81.86} & \multicolumn{1}{c|}{81.57} & \multicolumn{1}{c|}{81.26} & \multicolumn{1}{c|}{81.34} & 80.07 \\ \cline{1-1}
MOZI &
  \multicolumn{1}{c|}{83.36} &
  \multicolumn{1}{c|}{\theircolor 82.06} &
  \multicolumn{1}{c|}{\theircolor 81.96} &
  \multicolumn{1}{c|}{81.64} &
  \multicolumn{1}{c|}{\theircolor 82.56} &
  79.45 \\ \cline{1-1}
DMedian  & \multicolumn{1}{c|}{83.52} & \multicolumn{1}{c|}{80.36} & \multicolumn{1}{c|}{77.64} & \multicolumn{1}{c|}{\theircolor 82.64} & \multicolumn{1}{c|}{80.29} & \theircolor 80.75 \\ \cline{1-1}
\secureDL & \multicolumn{1}{c|}{\ourcolor 82.29} & \multicolumn{1}{c|}{\ourcolor 81.38} & \multicolumn{1}{c|}{\ourcolor 76.41} & \multicolumn{1}{c|}{\ourcolor 81.59} & \multicolumn{1}{c|}{\ourcolor 75.12} & \ourcolor 80.06 \\ \hline
\end{tabular}%
}
\end{subtable}

\vspace{1em} 

\begin{subtable}{\linewidth}
\centering
\caption{SVHN dataset}
\label{tab:SVHN}
\setlength{\tabcolsep}{4pt} 
\resizebox{0.8\columnwidth}{!}{%
\begin{tabular}{|c|cccccc|}
\hline
\multirow{2}{*}{\# Client} &
  \multicolumn{6}{c|}{\multirow{2}{*}{N = 10}} \\
                           & \multicolumn{6}{c|}{}      \\ \hline
\# Byzantine &
  \multicolumn{1}{c|}{0} &
  \multicolumn{1}{c|}{2} &
  \multicolumn{1}{c|}{2} &
  \multicolumn{1}{c|}{2} &
  \multicolumn{1}{c|}{2} &
  4 \\ \hline
Attack type &
  \multicolumn{1}{c|}{W/O} &
  \multicolumn{1}{c|}{SF} &
  \multicolumn{1}{c|}{SA} &
  \multicolumn{1}{c|}{Noise} &
  \multicolumn{1}{c|}{LF} &
  Combi \\ \hline
Mean &
  \multicolumn{1}{c|}{\theircolor 95.65} &
  \multicolumn{1}{c|}{21.34} &
  \multicolumn{1}{c|}{22.39} &
  \multicolumn{1}{c|}{19.99} &
  \multicolumn{1}{c|}{94.82} &
  55.35 \\ \cline{1-1}
DKrum &
  \multicolumn{1}{c|}{93.57} &
  \multicolumn{1}{c|}{93.67} &
  \multicolumn{1}{c|}{93.43} &
  \multicolumn{1}{c|}{93.71} &
  \multicolumn{1}{c|}{93.80} &
  93.84 \\ \cline{1-1}
BRIDGE &
  \multicolumn{1}{c|}{95.45} &
  \multicolumn{1}{c|}{95.34} &
  \multicolumn{1}{c|}{95.34} &
  \multicolumn{1}{c|}{95.37} &
  \multicolumn{1}{c|}{95.38} &
  94.98 \\ \cline{1-1}
MOZI &
  \multicolumn{1}{c|}{95.46} &
  \multicolumn{1}{c|}{\theircolor 95.57} &
  \multicolumn{1}{c|}{95.39} &
  \multicolumn{1}{c|}{95.40} &
  \multicolumn{1}{c|}{\theircolor 95.44} &
  94.94 \\ \cline{1-1}
DMedian &
  \multicolumn{1}{c|}{95.24} &
  \multicolumn{1}{c|}{94.96} &
  \multicolumn{1}{c|}{94.67} &
  \multicolumn{1}{c|}{95.29} &
  \multicolumn{1}{c|}{94.86} &
  94.71 \\ \cline{1-1}
\secureDL &
  \multicolumn{1}{c|}{\ourcolor 95.12} &
  \multicolumn{1}{c|}{\ourcolor 94.54} &
  \multicolumn{1}{c|}{\theircolor 95.43} &
  \multicolumn{1}{c|}{\theircolor 95.47} &
  \multicolumn{1}{c|}{\ourcolor 95.30} &
  \theircolor 95.04 \\ \hline
\end{tabular}%
}
\end{subtable}

\vspace{1em} 

\begin{subtable}{\linewidth}
\caption{Fashion-MNIST dataset}
\label{tab:FashionMNIST}
\centering
\setlength{\tabcolsep}{4pt} 
\resizebox{0.8\columnwidth}{!}{%
\begin{tabular}{|c|cccccc|}
\hline
\multirow{2}{*}{\# Client} &
  \multicolumn{6}{c|}{\multirow{2}{*}{N = 10}} \\
                           & \multicolumn{6}{c|}{}      \\ \hline
\# Byzantine &
  \multicolumn{1}{c|}{0} &
  \multicolumn{1}{c|}{2} &
  \multicolumn{1}{c|}{2} &
  \multicolumn{1}{c|}{2} &
  \multicolumn{1}{c|}{2} &
  4 \\ \hline
Attack type &
  \multicolumn{1}{c|}{W/O} &
  \multicolumn{1}{c|}{SF} &
  \multicolumn{1}{c|}{SA} &
  \multicolumn{1}{c|}{Noise} &
  \multicolumn{1}{c|}{LF} &
  Combi \\ \hline
Mean &
  \multicolumn{1}{c|}{\theircolor 93.28} &
  \multicolumn{1}{c|}{78.12} &
  \multicolumn{1}{c|}{25.23} &
  \multicolumn{1}{c|}{19.14} &
  \multicolumn{1}{c|}{87.04} &
  21.13 \\ \cline{1-1}
DKrum    & \multicolumn{1}{c|}{92.37} & \multicolumn{1}{c|}{91.76} & \multicolumn{1}{c|}{92.07} & \multicolumn{1}{c|}{91.96} & \multicolumn{1}{c|}{92.22} & 91.97 \\ \cline{1-1}
BRIDGE   & \multicolumn{1}{c|}{93.24} & \multicolumn{1}{c|}{93.04} & \multicolumn{1}{c|}{\theircolor 92.99} & \multicolumn{1}{c|}{92.90} & \multicolumn{1}{c|}{\theircolor 92.79} & 92.57 \\ \cline{1-1}
MOZI &
  \multicolumn{1}{c|}{93.22} &
  \multicolumn{1}{c|}{\theircolor 93.08} &
  \multicolumn{1}{c|}{92.98} &
  \multicolumn{1}{c|}{\theircolor 93.19} &
  \multicolumn{1}{c|}{89.62} &
  92.31 \\ \cline{1-1}
DMedian  & \multicolumn{1}{c|}{93.02} & \multicolumn{1}{c|}{92.46} & \multicolumn{1}{c|}{92.73} & \multicolumn{1}{c|}{92.89} & \multicolumn{1}{c|}{92.49} & 92.71 \\ \cline{1-1}
\secureDL & \multicolumn{1}{c|}{\ourcolor 93.23} & \multicolumn{1}{c|}{\ourcolor 92.86} & \multicolumn{1}{c|}{\ourcolor 92.25} & \multicolumn{1}{c|}{\ourcolor 92.99} & \multicolumn{1}{c|}{\ourcolor 90.30} & \theircolor 92.93 \\ \hline
\end{tabular}%
}
\end{subtable}
\end{table}

In the CIFAR-10 dataset (Table \ref{tab:CIFAR10}), the \textit{mean} aggregation method (first row), which does not provide any defense against Byzantine clients, clearly reveals its vulnerability to various Byzantine attacks. The accuracy drops to 13.43\% under scaling attacks (SA) and 16.32\% under Gaussian noise, significantly highlighting its vulnerability. In contrast, \secureDL~demonstrates robust performance across the board, maintaining an accuracy of 76.41\% against SA and, impressively, 81.59\% against noise, vastly outperforming the unprotected \textit{mean} method. It consistently demonstrates competitive resilience when comparing \secureDL's performance to other robust methods. Against SA, \secureDL's 76.41\% is superior to DKrum's 74.3\% and closely trails behind BRIDGE's 81.57\%, illustrating its effective mitigation strategies. Moreover, \secureDL's 81.59\% accuracy under Gaussian noise surpasses DKrum's 73.32\% and nearly matches BRIDGE at 81.26\%. This evidence positions \secureDL~as a strong contender in ensuring model accuracy while providing resistance against a broad spectrum of Byzantine attacks. 

In the SVHN dataset (Table \ref{tab:SVHN}), the \textit{mean} aggregation method again demonstrates its weaknesses, with accuracy falling to 21.34\% under sign-flipping attacks and dropping to 19.99\% with Gaussian noise, highlighting its struggles against complex attacks. In contrast, \secureDL~stands out for its strength, achieving a solid 94.54\% accuracy against sign-flipping and improving it to 95.47\% against noise, significantly outperforming the \textit{mean} method. These results not only show \secureDL's ability to fend off attacks but also place it in close competition with other top methods such as MOZI and BRIDGE. Its effectiveness is especially noticeable in combined attack scenarios, where it maintains a 95.04\% accuracy, showing its comprehensive approach to defense. 

Lastly, in the Fashion-MNIST dataset (Table \ref{tab:FashionMNIST}), the impact of Byzantine attacks on \textit{mean} aggregation becomes very apparent, with a drastic accuracy fall to 25.23\% under scaling attacks (SA) and a further dip to 19.14\% when faced with Gaussian noise. \secureDL, however, emerges as a formidable contender, significantly mitigating the effects of these disruptions. It maintains an impressive 92.25\% accuracy against SA and 92.99\% accuracy against noise, demonstrating its superior defense capabilities. This performance places \secureDL~in a competitive stance with BRIDGE. The results show the advantage of \secureDL~ over DMedian, especially in the case of multiple attack types (i.e., combination attack  ), where \secureDL~achieves a remarkable 92.93\% accuracy.

\begin{table*}
\centering
\resizebox{0.8\textwidth}{!}{%
\begin{tabular}{|c|cccccc|cccccc|}
\hline
Measurement & \multicolumn{6}{c|}{Time Overhead (seconds) on CIFAR-10} & \multicolumn{6}{c|}{Time Overhead (seconds) on MNIST} \\ \hline
\# Clients &
  \multicolumn{1}{c|}{3} &
  \multicolumn{1}{c|}{5} &
  \multicolumn{1}{c|}{8} &
  \multicolumn{1}{c|}{10} &
  \multicolumn{1}{c|}{20} &
  30 &
  \multicolumn{1}{c|}{3} &
  \multicolumn{1}{c|}{5} &
  \multicolumn{1}{c|}{8} &
  \multicolumn{1}{c|}{10} &
  \multicolumn{1}{c|}{20} &
  30 \\ \hline
Cosine Similarity &
  \multicolumn{1}{c|}{0.09} &
  \multicolumn{1}{c|}{0.20} &
  \multicolumn{1}{c|}{0.50} &
  \multicolumn{1}{c|}{0.73} &
  \multicolumn{1}{c|}{2.91} &
  5.63 &
  \multicolumn{1}{c|}{0.035} &
  \multicolumn{1}{c|}{0.089} &
  \multicolumn{1}{c|}{0.22} &
  \multicolumn{1}{c|}{0.36} &
  \multicolumn{1}{c|}{1.33} &
  3.20 \\
Secure Comparison &
  \multicolumn{1}{c|}{0.013} &
  \multicolumn{1}{c|}{0.039} &
  \multicolumn{1}{c|}{0.089} &
  \multicolumn{1}{c|}{0.147} &
  \multicolumn{1}{c|}{0.56} &
  1.11 &
  \multicolumn{1}{c|}{0.013} &
  \multicolumn{1}{c|}{0.035} &
  \multicolumn{1}{c|}{0.096} &
  \multicolumn{1}{c|}{0.14} &
  \multicolumn{1}{c|}{0.55} &
  1.06 \\
L2-Normalization &
  \multicolumn{1}{c|}{0.2} &
  \multicolumn{1}{c|}{0.41} &
  \multicolumn{1}{c|}{0.91} &
  \multicolumn{1}{c|}{1.28} &
  \multicolumn{1}{c|}{4.32} &
  9.13 &
  \multicolumn{1}{c|}{0.15} &
  \multicolumn{1}{c|}{0.33} &
  \multicolumn{1}{c|}{0.69} &
  \multicolumn{1}{c|}{0.99} &
  \multicolumn{1}{c|}{3.28} &
  7.34 \\ \hline
\end{tabular}%
}
\caption{Overhead measurements for protocol functions across client counts, showcasing the CNN architecture over CIFAR-10 and the MLP over MNIST.}
\label{tab:overhead}
\end{table*}

\subsection{Impact of the fraction of Byzantine clients}

\begin{figure}[tb!]
    \centering
    \begin{minipage}{0.495\linewidth}
        \centering
        \includegraphics[width=\linewidth]{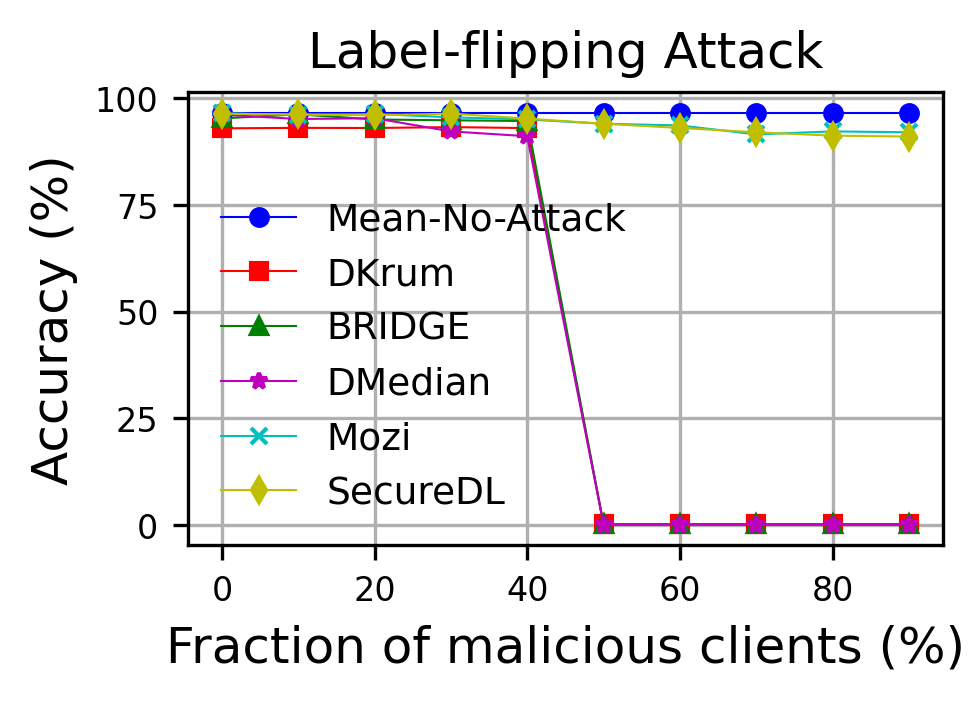}
    \end{minipage}\hfill
    \begin{minipage}{0.495\linewidth}
        \centering
        \includegraphics[width=\linewidth]{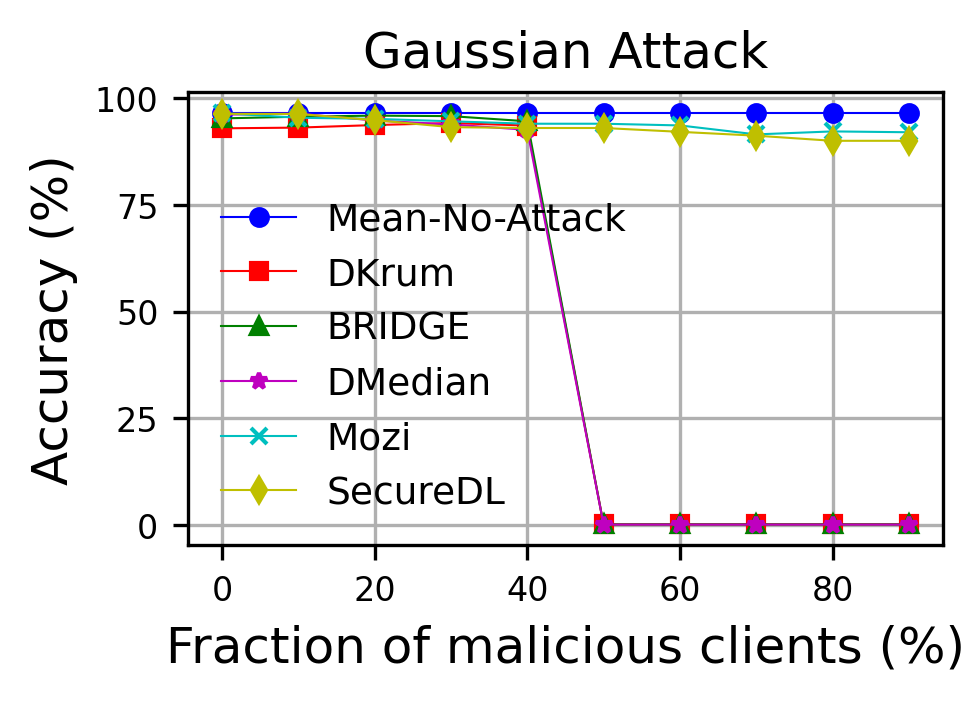}
    \end{minipage}
    
    \begin{minipage}{0.495\linewidth}
        \centering
        \includegraphics[width=\linewidth]{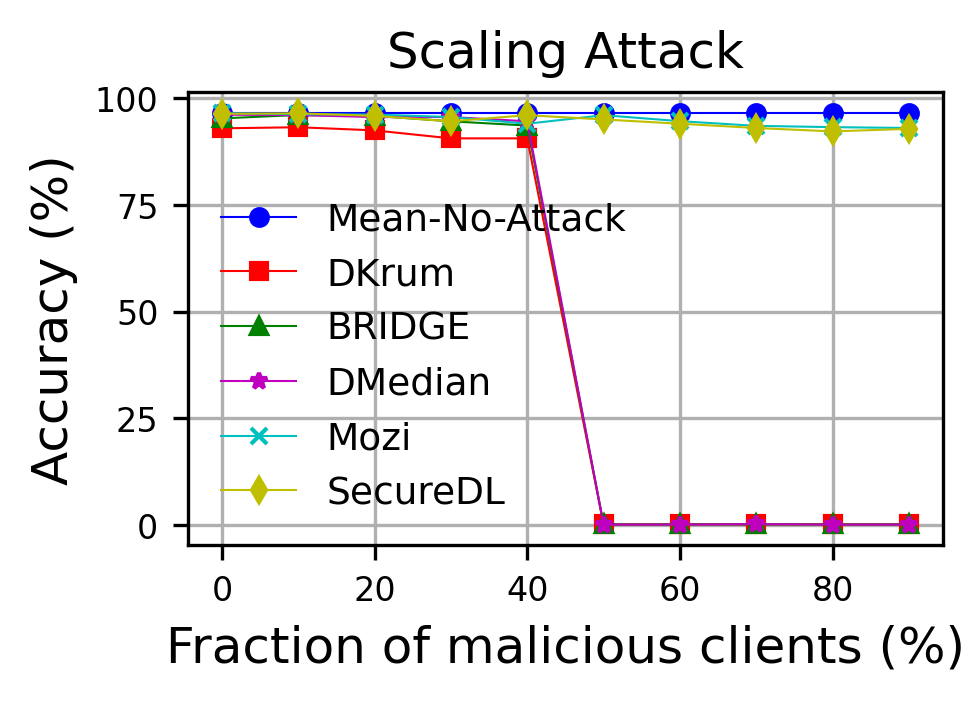}
    \end{minipage}\hfill
    \begin{minipage}{0.495\linewidth}
        \centering
        \includegraphics[width=\linewidth]{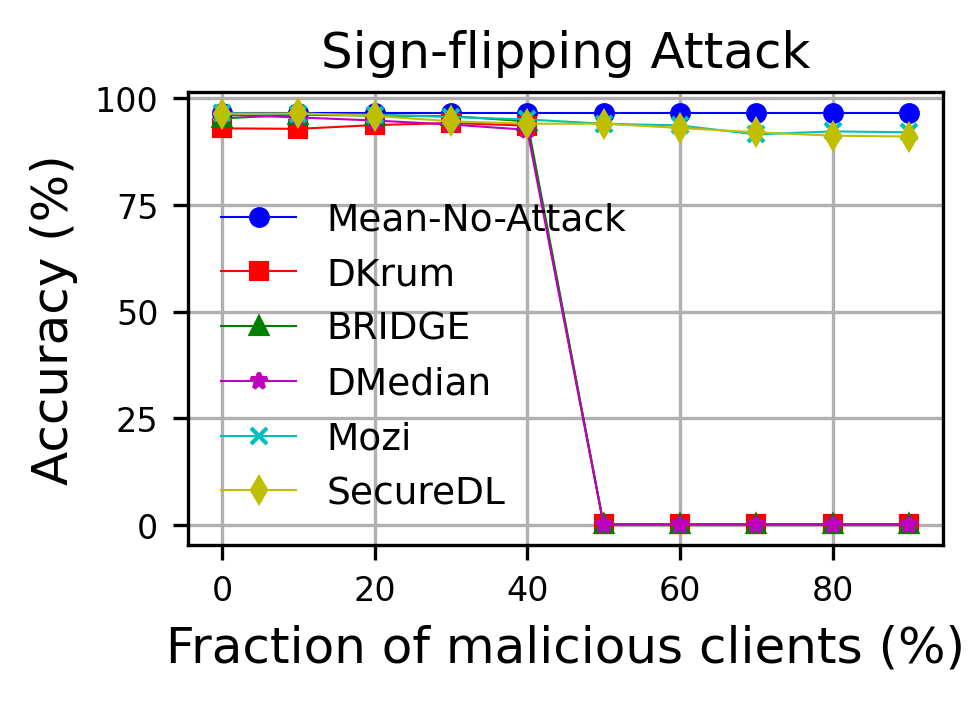}
    \end{minipage}

    \caption{Implications of varying the fraction of malicious clients in different attack scenarios on MNIST dataset. }
    \label{fractionbyz}
\end{figure}

Figure \ref{fractionbyz} illustrates the performance of various DL aggregation schemes in response to different attacks on the MNIST dataset, tracking the change in accuracy as the proportion of malicious clients varies between 0 to 80\%.

In the context of Noise and Label-flipping Attacks, the performance of Byzantine robust aggregation rules varies significantly. DKrum, reflecting the limitations of Krum's scoring mechanism, shows a notable decrease in accuracy, dropping to around 86\% at a less than 50\% malicious client level for both attack types. This indicates its vulnerability in scenarios where the proportion of malicious clients approaches its tolerance limit. BRIDGE, employing Trimmed-Mean, demonstrates a similar trend with its accuracy falling to 88\%, indicative of its effective yet limited capacity to manage Byzantine clients, aligning with its theoretical tolerance of nearly 50\% malicious clients. DMedian, influenced by the Median method's approach, maintains relatively higher accuracy (around 92\%) but begins to decline as the number of Byzantine clients approaches \( \frac{n}{2} \). Mozi, utilizing a combination of distance-based and performance-based filtering, exhibits commendable resilience, maintaining over 94\% accuracy in both attack scenarios. SecureDL, our proposed method, stands out in these settings, closely matching or slightly surpassing Mozi's robustness, particularly with 80\% of clients being Byzantine, in decentralized networks with a significant presence of Byzantine actors.


Under Sign-flipping and Scaling Attacks, we observe that DKrum's accuracy, echoing the limitations of its core selection process, declines to 86\% at a less than 50\% malicious client fraction in both attacks, underscoring its struggle in scenarios with a high concentration of Byzantine clients. The coordinate-wise strategies like BRIDGE and DMedian, while demonstrating better adaptability than DKrum, experience accuracy drops to 88\% and 92\%, respectively. This indicates their challenges in maintaining performance integrity when approaching their respective Byzantine tolerance limits. Mozi's method ensures a more stable performance, maintaining accuracy above 94\% in both attack types, thereby demonstrating its efficacy in filtering benign contributions. \secureDL, however, emerges as a notably robust method in these scenarios. Its consistent high accuracy, often slightly ahead of Mozi, particularly under scaling attacks, reinforces its suitability for decentralized learning models facing diverse Byzantine challenges. This underscores \secureDL's potential as a versatile and resilient choice in environments with a broad range of Byzantine threats.


\subsection{Computational and Communication Complexity}

In the \secureDL~protocol within a multi-party computation framework, the computational complexity is shaped by several vital steps involving all participating clients. Initially, each of the \( n \) clients engage in additive secret sharing, computing and distributing secret shares of their inputs to the others, resulting in \( n \times (n - 1) \) computational and communication tasks and yielding a complexity of \( O(n^2) \). This is followed by the computation of cosine similarities, where clients jointly compute similarities between it's own model update and every other client's update. This step involves \( n^2 \) computations, with each of the \( n \) clients jointly calculating similarities with every other client's update. This process, necessitated by security requirements for separate computations despite identical calculations between client pairs, contributes an overall complexity of \( O(n^2) \). The protocol then involves secure comparisons of these cosine similarities against a threshold maintaining the \( O(n^2) \) scaling. Finally, normalization, dependent on the number of vectors accepted based on the threshold, involves all clients and likely scales linearly with the number of clients, contributing further to the \( O(n^2) \) complexity in the worst case. Consequently, the cumulative complexity of the \secureDL~protocol is determined to be \( O(n^2) \), reflecting the quadratic scaling in computational demands with the increase in the number of participating clients.

The overall communication complexity is influenced by multiple computational steps, each with its own impact on the total communication requirements. Initially, clients engage in secret sharing with linear complexity \( O(n) \), where \( n \) is the number of clients. This is followed by the cosine similarity computation and L2 normalization, each involving \( k \) Beaver triple multiplications per pairwise computation among the \( n \) clients, resulting in a significant communication load of \( O(kn^3) \) for each step. Additionally, the protocol employs a secure comparison operation, which, based on the Rabbit protocol's complexity of \( \mathcal{O}(\ell \log \ell) \) and being performed \( n^2 \) times (once for each pair of clients), contributes significantly to the communication complexity. Assuming that the bit length \( \ell \) of the integers involved is constant and not dependent on \( n \), this step adds \( O(n^2 \ell \log \ell) \) to the overall complexity. The final step, normalization, similar to cosine similarity, also involves \( k \) Beaver triple multiplications for each of the \( n^2 \) computations, further contributing \( O(kn^3) \). Therefore, when combining these complexities, the overall communication complexity of the \secureDL~protocol is dominated by the cosine similarity, L2 normalization, and secure comparison steps, culminating in a total complexity of approximately \( O(kn^3)\). This underlines the protocol’s considerable communication demands, scaling cubically with the number of clients due to the Beaver triple operations.

\subsection{Efficiency of \secureDL}

The detailed examination of \secureDL's computational efficiency, as outlined in Table~\ref{tab:overhead}, gives indications on the time overhead for critical protocol functions—Cosine Similarity, Secure Comparison, and L2-Normalization—across a broad spectrum of client configurations, from 3 to 30, within CIFAR-10 and MNIST datasets.
Here, we omit the SVHN and Fashion MNIST datasets as they use the same model architectures with the other ones.
In a nutshell, the results show a consistent trend of increasing computational overhead with the increasing number of clients in \secureDL. For instance, in the CIFAR-10 dataset, the execution time for Cosine Similarity dramatically increases from 0.09 seconds with just three clients to 5.63 seconds with 30 clients, marking a substantial rise. L2-Normalization time exhibits a similar pattern, escalating from 0.2 seconds for three clients to 9.13 seconds for thirty clients, indicating the growing computational demands as the network size expands. This upward trend in processing time is consistently observed, albeit at lower absolute times, in the MNIST dataset, where the overhead for Cosine Similarity grows from 0.035 seconds to 3.20 seconds, and for L2-Normalization, from 0.15 seconds to 7.34 seconds as the client count increases from 3 to 30.

Crucially, the MNIST dataset evaluations were conducted over a simpler two-layer MLP network. In contrast, the CIFAR-10 assessments utilized a more complex model described in Table \ref{model_arc}, clarifying the observed discrepancies in time overheads between them. This distinction highlights the inherent differences in dataset complexity and model architecture and emphasizes \secureDL's flexibility and effectiveness across varying computational landscapes. By accommodating different network complexities, \secureDL~demonstrates a comprehensive capability to secure decentralized learning environments, effectively balancing the dual demands of robust security measures and computational efficiency.

\section{Conclusion}

In this paper, we present a decentralized machine learning approach called \secureDL~designed to enhance security and privacy in the face of Byzantine attacks. By shifting the focus from traditional client update monitoring methods to a more collaborative and secure approach, \secureDL~makes the decentralized learning framework more resilient against Byzantine and semi-honest entities. Its innovative aggregation mechanism, which rigorously scrutinizes each model update, ensures robust defense against adversarial actions while preserving the privacy of individual clients. 

The empirical evaluations of \secureDL~across various datasets demonstrates its effectiveness in providing a resilient defense mechanism against a spectrum of attacks, pushing the envelope for decentralized learning. 
Furthermore, the performance analysis of \secureDL~ sheds light to the impact of the privacy-preserving changes to the overall training process while underlining the efficiency of the proposed method. Our work contributes a vital piece to the puzzle of protecting collaborative machine learning environments by propelling the field forward with regards to critical security challenges and lays the groundwork for future innovations.

\bibliographystyle{plain}
\bibliography{sample-base}

\appendix

\section{Privacy Analysis}\label{sec:privacyAnalysis}

\begin{proof}[Proof of Theorem \ref{securedlproof}]

The functionality of the protocol \(\pi\), denoted as \(f(x, y):=w\),  is deterministic and ensures correctness. This correctness is verified by computing the sum of all outputs from \(n\) parties and evaluating this sum modulo $\mathbb{Z}_{2^k}$, specifically, \((w_1 + w_2 + \ldots + w_n\) mod $\mathbb{Z}_{2^k}$). The protocol specifies for each party \(P_i\) a corresponding component of \(f(x, y)\), \(f_i(x, y):=w_i\), where \(w_i\) is the output computed by party \(P_i\).

Define the view of party $P_i$ during the execution of protocol $\pi$ as:
\[
\operatorname{view}_i^\pi(x, y) := (x_i, y_i, r_i, m_1, m_2, \ldots, m_t) \in \mathbb{Z}_{2^k},
\]

where $r_i$ represents the outcome of the $P_i$ internal coin tosses and $m_1, m_2, \ldots, m_t$ denote the received messages. 
\\
To prove that \(\pi\) privately computes $f$, we must show that there exists a probabilistic polynomial-time simulator $S$ for every $I$ such that:
\[
\{S(x_I, f_I(x, y))\}_{x, y} \stackrel{\text{perf}}{\equiv} \{\operatorname{view}_I^\pi(x, y)\}_{x, y}.
\]

The SecureDL algorithm involves three key operations on secret-shared data: Cosine similarity computation, comparison over secret shared data and L2normalization.

The actual view of \(P_I\), for the computation is as follows:

\begin{align*}
\operatorname{view}_I^\pi(\vec{x}) = (&\vec{x}, m^{cosine}_1, m^{cosine}_2, \ldots, m^{cosine}_I, m^{comp}_1,\\
& m^{comp}_2,\ldots, m^{comp}_n,m^{l2norm}_1, m^{l2norm}_2, \ldots,\\
&m^{l2norm}_I)
\end{align*}

where \(\vec{x}\) denotes the vectors of additive shares of the inputs held by each honest parties, \(m^{cosine}_i\) are the messages exchanged during the cosine similarity computation,  \(m^{comp}_i\) are the messages exchanged for the comparison operation and \(m^{l2norm}_i\) are the messages exchanged for the L2-Normalization computation.

Given the additive sharing scheme, all elements involved in \(\operatorname{view}_I^\pi(\vec{x}\) are independently and uniformly distributed within \(\mathbb{Z}_{2^k}\), ensuring that from \(P_I\)'s perspective, all components are indistinguishable from random. This setup, based on the privacy-preserving properties of cosine similarity, secure comparison and L2Normalization operations, allows \(S_I\) to simulate a plausible set of interactions that \(P_I\) (or any group of honest parties) would observe, without revealing any private information. The simulator \(S_I\) thus constructs a simulated view that includes simulated messages for each step of the norm computation, ensuring these simulated components are indistinguishable from those in a real protocol execution.

We construct a simulator, denoted \(S_I\), for simulating the collective view of honest parties within the execution of SecureDL protocol. The role of \(S_I\) is to create a plausible simulation of the protocol execution as it would appear to the honest parties, focusing on the data exchanges and computations that occur without assuming control over the adversary's direct observations.

In the case of cosine similarity computation, \(S_I\) simulates the messages exchanged during this step, such as \(\tilde{m}^{cosine}_1, \tilde{m}^{cosine}_2, \ldots, \tilde{m}^{cosine}_n\). This simulation is consistent with the privacy assurances specified in Theorem \ref{cosine}, making these simulated messages indistinguishable from random values in \(\mathbb{Z}_{2^k}\) to any observer, thereby maintaining the integrity of the privacy guarantees.

During the comparison, \(S_I\) constructs messages such as \(\tilde{m}^{comp}_1, \tilde{m}^{comp}_2, \ldots, \tilde{m}^{comp}_n\), simulating the inter-party communications. SecureDL utilized Rabbit secure comparison protocol \cite{makri2021rabbit}, which has perfect security within \(\mathbb{Z}_{2^k}\). Therefore, to any observer, these simulated messages are indistinguishable from random values within \(\mathbb{Z}_{2^k}\), thereby upholding the integrity of the privacy guarantees.

Finally, during the L2-normalization computation, \(S_I\) constructs messages such as \(\tilde{m}^{l2norm}_1, \tilde{m}^{l2norm}_2, \ldots, \tilde{m}^{l2norm}_n\), simulating the messages exchanged during. These simulated messages are crafted to be statistically indistinguishable from those in a genuine protocol run, aligning with the privacy-preserving methodology of the square root function described in Theorem \ref{l2norm}.

The simulated view \(S_I\) provides for the protocol includes the inputs of all such parties, denoted as \(x_1, x_2, \ldots, x_m\), where \(m\) is the number of honest parties. Thus, the complete simulator view is represented as:

\begin{align*}
S_I(\bar{x}, f_I(\bar{x})) = (&\tilde{\vec{x}}, \tilde{m}^{cosine}_1, \tilde{m}^{cosine}_2, \ldots, \tilde{m}^{cosine}_n, \tilde{m}^{comp}_1\\
&, \tilde{m}^{comp}_2, \ldots, \tilde{m}^{comp}_n\, \tilde{m}^{l2norm}_1, \tilde{m}^{l2norm}_2, \ldots,\\
& \tilde{m}^{l2norm}_n)
\end{align*}

Every component of \( S_I \) is independent and uniformly random within \(\mathbb{Z}_{2^k}\)). As \(S_I\) is designed to select values that are not only independently and uniformly random but also drawn from the same distribution as those in \(\operatorname{view}_I^\pi\), these components become indistinguishable. This indistinguishability, ensures that they resemble random values, thus reinforcing the privacy aspect of the protocol.

\end{proof}

\begin{theorem}[privacy w.r.t semi-honest behavior]
\label{beavmulti}

Consider a secure multi-party computation protocol \(\pi\) designed to compute a function \(f(x, y)\) utilizing the Beaver multiplication approach (detailed in Section \ref{mul_sec}). This protocol involves \(n\) parties, denoted as \(P_1, P_2, \ldots, P_n\), with a subset \(I \subset [n]\) consisting of semi-honest parties where at least one party is honest, while the rest are semi-honest. The protocol \(\pi\) is considered to privately compute the Beaver multiplication \(f(x, y)\) if a probabilistic polynomial-time simulator \(S_i\) can be constructed such that, given all potential inputs, the simulated view of \(P_i\) is statistically indistinguishable from \(P_i\)'s actual view during the execution of the protocol.
\end{theorem}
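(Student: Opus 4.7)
The plan is to construct a probabilistic polynomial-time simulator $S_I$ for the coalition of semi-honest parties and argue that its output is statistically indistinguishable from their joint view in an honest execution of the Beaver multiplication subprotocol. Since the underlying MPC operates over $\mathbb{Z}_{2^k}$ with additive $n$-out-of-$n$ secret sharing, and since at least one honest party $P_h$ with $h \notin I$ holds uniformly random shares that are unavailable to $I$, the marginal distribution of shares held by $I$ is uniform in $\mathbb{Z}_{2^k}$ regardless of the secret. This is the structural fact on which the whole simulation argument rests, and I would state it as a preliminary observation before writing down $S_I$.

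First, I would catalogue the adversarial view. During Beaver multiplication, the parties in $I$ observe: (i) their own additive shares $\{[\![x]\!]_i,[\![y]\!]_i\}_{i\in I}$ of the inputs; (ii) their shares of the preprocessed triple $\{[\![a]\!]_i,[\![b]\!]_i,[\![c]\!]_i\}_{i\in I}$; (iii) the publicly reconstructed masked values $\delta=x-a$ and $\epsilon=y-b$; and (iv) their share of the output $[\![w]\!]_i=[\![c]\!]_i+\epsilon[\![b]\!]_i+\delta[\![a]\!]_i+\epsilon\delta$, which is deterministically derived from the preceding items and therefore adds no new randomness to the view. Thus the only non-trivial distributional objects the simulator must produce are the input shares, the triple shares, and the pair $(\delta,\epsilon)$.

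Next, I would define $S_I$ as follows. Given the adversary's inputs $\{x_i,y_i\}_{i\in I}$ and the functionality's output $f_I(\bar{x},\bar{y})$ (the shares of $w$ due to $I$), the simulator samples $\tilde{[\![a]\!]}_i,\tilde{[\![b]\!]}_i,\tilde{[\![c]\!]}_i\in\mathbb{Z}_{2^k}$ uniformly at random and independently for each $i\in I$, samples $\tilde{\delta},\tilde{\epsilon}\in\mathbb{Z}_{2^k}$ uniformly, and finally outputs the collection consisting of the actual corrupt inputs, these simulated triple shares, and $(\tilde{\delta},\tilde{\epsilon})$. The simulator uses the prescribed linear combination to produce a consistent output share. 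Correctness of the simulation reduces to two indistinguishability claims: the triple shares held by $I$ are uniform and independent of the honest party's share in both real and simulated worlds (immediate from $n$-out-of-$n$ additive sharing with fresh randomness generated in the offline phase), and the revealed masks $(\delta,\epsilon)$ are uniform and independent of $(x,y)$ because $a,b$ are sampled uniformly at random in the preprocessing and act as one-time pads over $\mathbb{Z}_{2^k}$.

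The main obstacle I anticipate is the joint-distribution argument: one must verify that conditioning on the output share $[\![w]\!]_i$ (which the simulator is given via $f_I$) does not break the uniformity of $(\delta,\epsilon)$ in the simulated transcript. I would resolve this by noting that the output share is a deterministic function of the triple shares, the input shares, and $(\delta,\epsilon)$, so consistency can always be enforced by fixing $\tilde{[\![c]\!]}_i$ last as a dependent variable that matches $f_I$; because $\tilde{[\![c]\!]}_i$ was originally uniform and the adjustment is an additive shift by a quantity independent of $(\tilde\delta,\tilde\epsilon)$ conditioned on the other simulated shares, the marginal uniformity of $(\tilde\delta,\tilde\epsilon)$ is preserved. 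Concluding, the simulated tuple is identically distributed (not merely computationally close) to $\operatorname{view}_I^{\pi}(x,y)$, which yields $\{S_I(\bar{x},f_I(\bar{x}))\}\stackrel{\text{perf}}{\equiv}\{\operatorname{view}_I^{\pi}(\bar{x})\}$ and completes the proof. Composition with the analogous simulators for addition, inversion, square root, cosine similarity, comparison, and L2 normalization already established in the paper then supports the stronger Theorem on \secureDL{} via a standard hybrid argument.
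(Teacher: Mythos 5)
Your proposal is correct and follows essentially the same simulation-based route as the paper: sample the corrupt parties' triple shares and the opened masks uniformly from $\mathbb{Z}_{2^k}$, derive the output share by the prescribed linear formula, and conclude perfect indistinguishability from the uniformity of $n$-out-of-$n$ additive shares and the one-time-pad role of $a,b$. If anything, you are slightly more careful than the paper in two places—treating $\delta,\epsilon$ explicitly as publicly reconstructed values whose uniformity comes from the masking by $a,b$, and enforcing consistency with the output share $f_I$ by fixing $\tilde{[\![c]\!]}_i$ last—both of which the paper's argument leaves implicit.
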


\begin{proof}
The functionality of the protocol \(\pi\), denoted as \(f(x, y):=w\),  is deterministic and ensures correctness. This correctness is verified by computing the sum of all outputs from \(n\) parties and evaluating this sum modulo $\mathbb{Z}_{2^k}$, specifically, \((w_1 + w_2 + \ldots + w_n\) mod $\mathbb{Z}_{2^k}$). The protocol specifies for each party \(P_i\) a corresponding component of \(f(x, y)\), \(f_i(x, y):=w_i\), where \(w_i\) is the output computed by party \(P_i\).

Define the view of party $P_i$ during the execution of protocol $\pi$ as:
\[
\operatorname{view}_i^\pi(x, y) := (x_i, y_i, r_i, m_1, m_2, \ldots, m_t) \in \mathbb{Z}_{2^k},
\]

where $r_i$ represents the outcome of the $P_i$ internal coin tosses and $m_1, m_2, \ldots, m_t$ denote the received messages. 
\\
To prove that \(\pi\) privately computes $f$, we must show that there exists a probabilistic polynomial-time simulator $S$ for every $I$ such that:
\[
\{S(x_I, f_I(x, y))\}_{x, y} \stackrel{\text{perf}}{\equiv} \{\operatorname{view}_I^\pi(x, y)\}_{x, y}.
\]

The actual view of $P$ in protocol $\pi$ for the multiplication function \( f(x, y) \) is :
\begin{align*}
\operatorname{view}_I^\pi(x, y) = (&x_I, y_I, a_I, b_I, c_I, \delta_I,\epsilon_I)
\end{align*}

where \(y_1, \delta_I,\) and \(\epsilon_I\) are messages received by \(P_I\) from other parties. Also, \(a_I\), \(b_I\) and \(c_I\) are the Beaver triples received by \(P_I\) for a multiplication.

It is important to note that all these elements are additively shared between the parties or results of the addition of two additively shared values. Therefore, based on the definition in Section \ref{mul_sec}, they are independent uniformly random within the \(\mathbb{Z}_{2^k}\) set. 

Furthermore, from the $\operatorname{view}_I^\pi(x, y)$, \(P_I\) can compute all other relevant quantities and this view contains all of \(P_I\)'s information.

Now, we construct a simulator, denoted \(S\), for the view of party \(P_I\) that simulates the computation of \(\delta\) and \(\epsilon\), representing the differences between simulated inputs \(\tilde{x}_I\) and \(\tilde{y}_I\), and predetermined shares \(\tilde{a}_I\) and \(\tilde{b}_I\) from Beaver triples. These simulated values are independent and chosen uniformly at random, ensuring they mirror the actual protocol's input and share distribution. The simulator then calculates \(\tilde{\delta}_I = \tilde{x}_I - \tilde{a}_I\) and \(\tilde{\epsilon}_I = \tilde{y}_I - \tilde{b}_I\), mirroring the protocol's approach to determining \(\delta\) and \(\epsilon\).

With \({\delta}\) and \({\epsilon}\) computed, \(S\) proceeds to simulate the computation of the share of the product, \(\tilde{w}_I\), using the formula \(\tilde{w}_I = \tilde{c}_I + {\epsilon} \cdot \tilde{b}_I + {\delta} \cdot \tilde{a}_I + {\epsilon} \cdot {\delta}\). This step effectively mimics the actual protocol's computation, incorporating the adjustment based on the computed differences (\(\delta\) and \(\epsilon\)) and the correction factor (\(\tilde{\epsilon_I} \cdot \tilde{\delta_I}\)), utilizing the simulated shares of \(a\), \(b\), and \(c\) from the Beaver triples. 

We define simulator $S$ view for the multiplication as

\begin{align*}
S(x_I, f_I(x, y)) = (&x, \tilde{x_1}, \tilde{y_1}, \tilde{a_I}, \tilde{b_I}, \tilde{c_I},\tilde{\delta_I}\tilde{\epsilon_I})
\end{align*}

where each component of $S$ is independent uniformly random in \(\mathbb{Z}_{2^k}\). Since \(S\) selects and computes values that are independently and uniformly random, and from the same distribution as those in \(\operatorname{view}^\pi\), they are indistinguishable. As a result, the statistical distance between the simulated view \(S(x_I, f_I(x, y))\) and the actual view \(\operatorname{view}_I^\pi(x, y)\) is effectively zero for all \(x, y\). Consequently, all components, except for \(x\), are indistinguishable from random values, thereby affirming the protocol's privacy. \\

\end{proof}

\begin{theorem}[privacy w.r.t semi-honest behavior]
\label{invproof}
Consider a secure multi-party computation protocol \(\pi\) designed to compute a function \(f(x)\) utilizing the inverse function (detailed in Section \ref{inv}). This protocol involves \(n\) parties, denoted as \(P_1, P_2, \ldots, P_n\), with a subset \(I \subset [n]\) consisting of semi-honest parties where at least one party is honest, while the rest are semi-honest. The protocol \(\pi\) is considered to privately compute the inverse function \(f(x)\) if a probabilistic polynomial-time simulator \(S_i\) can be constructed such that, given all potential inputs, the simulated view of \(P_i\) is statistically indistinguishable from \(P_i\)'s actual view during the execution of the protocol.
\end{theorem}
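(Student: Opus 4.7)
The plan is to mirror the simulation-based structure used in Theorems~\ref{securedlproof} and \ref{beavmulti}, exploiting the fact that the inverse routine of Section~\ref{inv} is built entirely out of operations for which we already possess, or can trivially construct, simulators. Specifically, the Newton--Raphson recurrences $B_{s+1} = 2B_s - B_s M_s$ and $M_{s+1} = 2M_s - M_s^2$ decompose into three primitive actions on additive shares in $\mathbb{Z}_{2^k}$: (i) local scalar multiplication (by $2$), (ii) local addition/subtraction, and (iii) share multiplication, the last of which is implemented via Beaver triples and already has a proven simulator by Theorem~\ref{beavmulti}. The initialisation $B_0 = c^{-1}$ and $M_0 = c^{-1} X$ uses a public constant $c$, so $c^{-1}$ is public and $M_0$ is obtained by a single public-constant multiplication, which is likewise local on shares. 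Hence no new round of interaction introduces information outside what the Beaver sub-protocol already reveals.

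The first step I would take is to write the view $\operatorname{view}_I^{\pi}(x)$ of the honest coalition $I$ explicitly as the concatenation of their input share $x_I$, the publicly known constants, and for each of the roughly $15$ iterations the two pairs of opened values $(\delta^{(s)}_I, \epsilon^{(s)}_I)$ and Beaver triple shares $(a^{(s)}_I, b^{(s)}_I, c^{(s)}_I)$ consumed by the two multiplications $B_s \cdot M_s$ and $M_s \cdot M_s$. Local linear combinations contribute no new messages and can be reconstructed by $I$ from what they already hold, so they need not appear in the view.

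Next, I would build the simulator $S_I$ compositionally. At each multiplication step, $S_I$ invokes the Beaver-multiplication simulator from Theorem~\ref{beavmulti} to produce simulated triple shares $(\tilde a^{(s)}_I, \tilde b^{(s)}_I, \tilde c^{(s)}_I)$ and simulated openings $(\tilde\delta^{(s)}_I, \tilde\epsilon^{(s)}_I)$, all uniformly random in $\mathbb{Z}_{2^k}$. The output share of each simulated multiplication is then fed as the input share of the subsequent operation, with freshness of the next triple guaranteeing that consecutive transcripts remain independent and uniform from $I$'s perspective. Indistinguishability follows by a standard hybrid argument across the $\approx 30$ multiplications: each hop swaps one real Beaver invocation for its simulated counterpart, and Theorem~\ref{beavmulti} bounds the distance at each hop by $0$ (perfect indistinguishability over $\mathbb{Z}_{2^k}$), so the total distance is $0$.

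The main obstacle I anticipate is arguing composition cleanly: the output share of one multiplication is reused as an input share to the next, and one must verify that reusing simulated shares does not leak correlations detectable by $I$. The resolution is the same one invoked in Theorem~\ref{securedlproof}: additive secret sharing guarantees that any single share held by a strict subset of parties is uniformly distributed and independent of the underlying value, and the fresh randomness drawn for each new Beaver triple re-randomises the transcript at every step. A secondary subtlety is fixed-point truncation of intermediate products, which could in principle leak bits; I would handle this by assuming, consistently with the protocol specification in Section~\ref{inv}, that truncation is performed using standard share-based techniques whose leakage is already absorbed into the Beaver multiplication simulator, so it adds no new transcript beyond what Theorem~\ref{beavmulti} already simulates.
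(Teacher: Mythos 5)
Your proposal is correct and follows essentially the same route as the paper's proof: decompose the Newton--Raphson iteration into local linear operations plus a chain of Beaver multiplications, invoke the simulator of Theorem~\ref{beavmulti} for each multiplication, and argue that reuse of an unopened output share as the next input is safe because additive shares held by a strict subset are uniform and each multiplication consumes a fresh triple. Your explicit hybrid argument over the $\approx 30$ multiplications is a slightly more formal packaging of the paper's iteration-by-iteration simulation, and your truncation caveat is an extra consideration the paper does not discuss, but neither changes the substance of the argument.
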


\begin{proof}
The protocol \(\pi\) ensures deterministic correctness by summing the outputs of \(n\) parties modulo \(\mathbb{Z}_{2^k}\). Each party \(P_i\) contributes a specific output \(w_i\) to the function \(f(x, y)\). The privacy of \(\pi\) is shown by proving that for any inputs \(x\) and \(y\), a simulator can generate a view for party \(P\) that is indistinguishable from its actual view during the protocol's execution, encompassing inputs, internal randomness, and received messages.

The inversion function consists of three main operations on secret-shared inputs: addition, multiplication by a public constant, and Beaver multiplication. In contrast to Beaver multiplication, both the addition and the multiplication by a public constant operations are executed locally on the secret-shared data.

The inverse function begins with a ``for'' loop, within which there are two Beaver multiplications; the output of the first multiplication serves as the input for the second Beaver multiplication. Therefore, we can define the operations inside this loop as \(f(x, y)\) and represent this function in the following form:

\begin{align}
f'(x,y) = x \times y = w \nonumber\\
f''(w, x) = w \times x \nonumber
\end{align}

Here, \( w \) is the output of the first multiplication \( f'(x, y) \), and the output of \( f''(w, x) \) is equal to \( f(x, y) \).

Based on Theorem \ref{beavmulti}, protocol \(\pi\) can compute each of the multiplication privately. As long as \( w1 \), \( w2 \), \ldots and \(w_n\), the shares of the Beaver multiplications output, are not revealed, they remain uniformly random and independent. However, since the output of the first multiplication is the input of the second multiplication, we need to show that the first multiplication \( f'(x, y) \) remains private when the second multiplication \( f''(w, x) \) is executed. In other words, the information received by \( P \) during the second multiplication should not reveal any details about the inputs \( x \) and \( y \) of the first multiplication. \\

Therefore, we need to define the actual view of \(P_I\) within protocol \(\pi\) specifically for the second multiplication:

\begin{align*}
\operatorname{view}_I^\pi(w, x) = &\ (x_I, y_I, w_I, a_I, b_I, c_I, a_I', b_I', c_I', \\
&\ \delta_2, \delta_3, \ldots, \delta_n, \epsilon_2, \epsilon_3, \ldots, \epsilon_n, \\
&\ \delta_2', \delta_3', \ldots, \delta_n', \epsilon_2', \epsilon_3', \ldots, \epsilon_n')
\end{align*}

where $y_1$ is the share of initial guess, \(\delta_2, \delta_3, \ldots, \delta_n,\) and \(\epsilon_2, \epsilon_3, \ldots, \epsilon_n\) are messages received by \(P\) from other parties for the second multiplication, and \(\delta_2', \delta_3', \ldots, \delta_n', \epsilon_2', \epsilon_3', \ldots, \epsilon_n'\) are messages received by \(P\) from other parties for the first multiplication. Also, \(a_I\), \(b_I\) and \(c_I\) and \(a_I'\), \(b_I'\) and \(c_1'\) are the Beaver triples received by \(P\) for the multiplications and \(w_1\) is the result of first multiplication.

All these elements, except for \(x\), are additively shared between the parties or results of the addition of two additively shared values. Therefore, based on the definition in \ref{mul_sec}, they are independent uniformly random within the \(\mathbb{Z}_{2^k}\) set. \\

Furthermore, from the $\operatorname{view}_I^\pi(w, x)$, \(P\) can compute all other relevant quantities and this view contains all of \(P\)'s information.

Now, we construct a simulator, denoted \(S\), for the view of party \(P\) that simulates the computation of \(\delta\) and \(\epsilon\), representing the differences between simulated inputs \(\tilde{x}_I\) and \(\tilde{y}_I\), and predetermined shares \(\tilde{a}_I\) and \(\tilde{b}_I\) from Beaver triples. These simulated values are independent and chosen uniformly at random, ensuring they mirror the actual protocol's input and share distribution. The simulator then calculates \(\tilde{\delta} = \tilde{x} - \tilde{a}_I\) and \(\tilde{\epsilon} = \tilde{y} - \tilde{b}_I\), mirroring the protocol's approach to determining \(\delta\) and \(\epsilon\).

With \({\delta}\) and \({\epsilon}\) computed, \(S\) proceeds to simulate the computation of the share of the product, \(\tilde{w}_I\), using the formula \(\tilde{w}_I = \tilde{c}_I + {\epsilon} \cdot \tilde{b}_I + {\delta} \cdot \tilde{a}_I + {\epsilon} \cdot {\delta}\). This step effectively mimics the actual protocol's computation, incorporating the adjustment based on the computed differences (\(\delta\) and \(\epsilon\)) and the correction factor (\(\tilde{\epsilon} \cdot \tilde{\delta}\)), utilizing the simulated shares of \(c\), \(a\), and \(b\) from the Beaver triples.

Now, we define simulator $S$ as
\begin{align*}
S(x_I, f_I(w, x)) = (&\tilde{x}_I, \tilde{y}_I, \tilde{a}_I, \tilde{b}_I, \tilde{c}_I, \tilde{a}_I', \tilde{b}_I', \tilde{c}_I', \tilde{\delta}_2\\
&, \tilde{\delta}_3, \ldots, \tilde{\delta}_n, \tilde{\delta}_2', \tilde{\delta}_3', \ldots, \tilde{\delta}_n', \tilde{\epsilon}_2, \tilde{\epsilon}_3\\
&, \ldots, \tilde{\epsilon}_n, \tilde{\epsilon}_2', \tilde{\epsilon}_3', \ldots, \tilde{\epsilon}_n')
\end{align*}

Unlike the first beaver multiplication, the simulator view for the second multiplication consists of all messages \( P \) receives during the second multiplication, plus what it could see from the first multiplication. 

As shown, all components of \( S \) are independent and uniformly random in \(\mathbb{Z}_{2^k}\), except for \( x \), which is \( P \)'s private input. Since \( S \) selects and computed values that are independently and uniformly random, and from the same distribution as those in \(\operatorname{view}^\pi\), they are indistinguishable. As a result, the statistical distance between the simulated view \( S(x, f''_I(w, x)) \) and the actual view \(\operatorname{view}_I^\pi(w, x) \) is effectively zero for all \( w, x \). Consequently, all components, except for \( x \), are indistinguishable from random values, thereby affirming the protocol's privacy.

Moreover, since for each multiplication, only a new set of Beaver triples is used, the values \(\tilde{\gamma_n}\), \(\tilde{\delta_n}\) are different from \(\tilde{\gamma_n'}\), \(\tilde{\delta_n'}\) in the second multiplication. Therefore, all these intermediate value remain indistinguishable from random values.

Before discussing the actual and simulated views of the entire inversion function, we first need to demonstrate that the local operations on share values in inverse function do not reveal any information about the output share of the second multiplication. As shown in this algorithm, each client subtract its output share of the second multiplication (which we call \(w'\)) from twice of one of the input share of the multiplication, \(x\). Therefore, we need to demonstrate that simulator \(S\) can mimic the actual protocol's computation:
\[
[\![ w'' ]\!] = 2 \times [\![ x ]\!] - [\![ w' ]\!] 
\]
To ensure the indistinguishability of the simulated computation from the actual protocol's operation $2 \times [\![ x ]\!]_i - [\![ w' ]\!]_i$, the simulator $S$ selects uniformly distributed and independent random variables, $[\![\tilde{x} ]\!]$ and $[\![\tilde{w}' ]\!]$. These variables are crucial for mimicking the actual computation without revealing any sensitive information about the output of the second multiplication, $w'$, or the input, $x$. By calculating $[\![\tilde{w}'']\!] = 2 \times [\![\tilde{x}]\!] - [\![\tilde{w}']\!]$ with these selected variables, $S$ manages to simulate the operation in a way that the outcome mirrors the distribution and independence characteristics of the actual computation. 

The above steps represent a single execution of the loop inside the inversion function. To show the complete simulator view for this function, we need to demonstrate the simulator view when these steps are repeated \(k\) times:

\begin{align*}
S(x_I, f_I(x)) = (&\tilde{x}_I, \tilde{y}_I, \tilde{a}_I^{(1)}, \tilde{b}_I^{(1)}, \tilde{c}_I^{(1)}, \tilde{a}_I'^{(1)}, \tilde{b}_I'^{(1)}, \tilde{c}_I'^{(1)}, \tilde{\delta}_2^{(1)}, \\
&\tilde{\delta}_3^{(1)}, \ldots, \tilde{\delta}_n^{(1)}, \tilde{\delta}_2'^{(1)}, \tilde{\delta}_3'^{(1)}, \ldots, \tilde{\delta}_n'^{(1)}, \tilde{\epsilon}_2^{(1)}, \tilde{\epsilon}_3^{(1)},\\
& \ldots, \tilde{\epsilon}_n^{(1)}, \tilde{\epsilon}_2'^{(1)}, \tilde{\epsilon}_3'^{(1)}, \ldots, \tilde{\epsilon}_n'^{(1)}, \ldots, \tilde{a}_1^{(k)}, \tilde{b}_1^{(k)},\\
& \tilde{c}_1^{(k)}, \tilde{a}_1'^{(k)}, \tilde{b}_1'^{(k)}, \tilde{c}_1'^{(k)}, \tilde{\delta}_2^{(k)}, \tilde{\delta}_3^{(k)}, \ldots, \tilde{\delta}_n^{(k)}, \\
&\tilde{\delta}_2'^{(k)}, \tilde{\delta}_3'^{(k)}, \ldots, \tilde{\delta}_n'^{(k)}, \tilde{\epsilon}_2^{(k)}, \tilde{\epsilon}_3^{(k)}, \ldots, \tilde{\epsilon}_n^{(k)}, \tilde{\epsilon}_2'^{(k)}, \\
&\tilde{\epsilon}_3'^{(k)}, \ldots, \tilde{\epsilon}_n'^{(k)}, \tilde{w}^{(1)}, \tilde{w}'^{(1)}, \tilde{w}''^{(1)}, \ldots, \tilde{w}^{(k)},\\
& \tilde{w}'^{(k)}, \tilde{w}''^{(k)}
 )
\end{align*}

Every component of \( S \) is independent and uniformly random within \(\mathbb{Z}_{2^k}\), with the exception of \( x \), which represents \( P \)'s private input. As \( S \) is designed to select values that are not only independently and uniformly random but also drawn from the same distribution as those in \(\operatorname{view}_I^\pi\), these components become indistinguishable. This indistinguishability, applicable to all components other than \( x \), ensures that they resemble random values, thus reinforcing the privacy aspect of the protocol.

\end{proof}

\begin{theorem}[privacy w.r.t semi-honest behavior]
\label{sqr_proof}

Consider a secure multi-party computation protocol \(\pi\) designed to compute a function \(f(x)\) utilizing the square root function (detailed in Section \ref{sqr}). This protocol involves \(n\) parties, denoted as \(P_1, P_2, \ldots, P_n\), with a subset \(I \subset [n]\) consisting of semi-honest parties where at least one party is honest, while the rest are semi-honest. The protocol \(\pi\) is considered to privately compute the square root function \(f(x)\) if a probabilistic polynomial-time simulator \(S_i\) can be constructed such that, given all potential inputs, the simulated view of \(P_i\) is statistically indistinguishable from \(P_i\)'s actual view during the execution of the protocol.
\end{theorem}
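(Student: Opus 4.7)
The plan is to closely mirror the proof strategy used in Theorem \ref{invproof}, since the square root routine of Section \ref{sqr} has the same Newton--Raphson skeleton as the inversion routine: an iterative loop whose only secret-share-level operations are additions, multiplications by public constants, and Beaver multiplications. First I would argue deterministic correctness by noting that the final result is the sum of the parties' output shares modulo $2^k$, and then reduce the privacy claim to a composition of the simpler privacy claims already established. In particular, a single iteration can be viewed as the composition $f''(f'(B_s, M_s), \cdot)$ of two Beaver multiplications together with some local linear operations; by Theorem \ref{beavmulti}, each Beaver multiplication privately realises its ideal functionality on additively shared inputs.

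Next I would write down the actual view $\operatorname{view}_I^\pi(x)$ of the honest subset $I$ for one iteration: the input shares $[\![x]\!]_I$, the initial guess shares $[\![y]\!]_I$, the two independent Beaver triples $(a_I, b_I, c_I)$ and $(a_I', b_I', c_I')$ consumed by the iteration, and the opened differences $\delta_j, \epsilon_j, \delta_j', \epsilon_j'$ received from the other parties. Since all of these are additive shares or openings of $x-a$ and $y-b$ with fresh uniform $a,b$, each component is uniformly random and independent in $\mathbb{Z}_{2^k}$. I would then construct $S_I$ to draw tilde counterparts $\tilde{a}_I, \tilde{b}_I, \tilde{c}_I, \tilde{a}_I', \tilde{b}_I', \tilde{c}_I'$ and $\tilde{\delta}_j, \tilde{\epsilon}_j, \tilde{\delta}_j', \tilde{\epsilon}_j'$ uniformly at random, compute $\tilde{w}_I, \tilde{w}_I'$ via the Beaver reconstruction formula, and simulate the local updates $[\![B_{s+1}]\!] = 2[\![B_s]\!] - [\![B_s M_s]\!]$ and $[\![M_{s+1}]\!] = 2[\![M_s]\!] - [\![M_s^2]\!]$ using these simulated shares. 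Because the local operations require no communication, they contribute nothing new to the view beyond deterministic functions of already-simulated shares.

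The critical subtlety, exactly as in Theorem \ref{invproof}, is that $B_s$ and $M_s$ are reused across iterations: $M_s$ feeds both $B_{s+1}$ and $M_{s+1}$, and both quantities persist into the next round. I would handle this by appealing to the fact that each round consumes a \emph{fresh} pair of Beaver triples, so the openings $\tilde{\delta}^{(r)}, \tilde{\epsilon}^{(r)}, \tilde{\delta}'^{(r)}, \tilde{\epsilon}'^{(r)}$ in round $r$ are masked by independent randomness and thus remain uniform even conditioned on the views of earlier rounds. Concatenating the per-round simulators yields a full simulator
\[
S_I(x_I, f_I(x)) = \bigl(\tilde{x}_I, \tilde{y}_I, \{\tilde{a}_I^{(r)}, \tilde{b}_I^{(r)}, \tilde{c}_I^{(r)}, \tilde{a}_I'^{(r)}, \tilde{b}_I'^{(r)}, \tilde{c}_I'^{(r)}\}_{r=1}^{k}, \{\tilde{\delta}_j^{(r)}, \tilde{\epsilon}_j^{(r)}, \tilde{\delta}_j'^{(r)}, \tilde{\epsilon}_j'^{(r)}\}_{j,r}, \{\tilde{w}^{(r)}, \tilde{w}'^{(r)}\}_{r=1}^{k}\bigr)
\]
whose components match the joint distribution of $\operatorname{view}_I^\pi(x)$ componentwise.

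The main obstacle I anticipate is rigorously justifying the hybrid argument across the $k$ iterations: one must confirm that even though $B_s$ and $M_s$ are derived from shares that appeared in the previous round, the use of independent fresh triples in the current round ensures that the opened masks are uniform and independent of the view so far. A hybrid where one iteration at a time is swapped from real to simulated, invoking Theorem \ref{beavmulti} at each swap, resolves this cleanly and yields the desired perfect indistinguishability $\{S_I(\bar{x}, f_I(\bar{x}))\}_{\bar{x}} \stackrel{\text{perf}}{\equiv} \{\operatorname{view}_I^\pi(\bar{x})\}_{\bar{x}}$, completing the proof.
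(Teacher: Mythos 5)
Your proof is internally coherent as a simulation argument, but it analyzes a different protocol from the one the paper's proof of this theorem actually treats. You model one square-root iteration as two Beaver multiplications plus local linear updates, following the displayed recurrences $B_{s+1}=2B_s-B_sM_s$, $M_{s+1}=2M_s-M_s^2$ in Section \ref{sqr}; those recurrences, however, appear to be copied verbatim from the inversion section, and the paper's proof instead follows the textual description $x_{n+1}=\tfrac12\left(x_n+Y/x_n\right)$. Concretely, in the paper each of the $k$ iterations (i) invokes the \emph{entire inversion sub-protocol} of Section \ref{inv} on the current guess, (ii) Beaver-multiplies the result by the shared input, (iii) locally adds the current guess, and (iv) locally multiplies by the public constant $1/2$. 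The real view therefore contains, for every round $r$, the full transcript $m^{inv(r)}_2,\ldots,m^{inv(r)}_n$ of an inversion call in addition to one set of Beaver-multiplication messages, and the paper's simulator produces these by invoking Theorem \ref{invproof} as a black box. Your simulator never generates any inversion-transcript messages, so it is not even syntactically the same object as the real view of the protocol the theorem refers to; the indistinguishability claim cannot hold for the protocol as the paper defines it in its own proof.

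The repair is mechanical: keep your per-round structure, but replace ``two fresh Beaver multiplications per round'' with ``one invocation of the simulator of Theorem \ref{invproof} plus one fresh Beaver multiplication per round,'' and simulate the local addition and the multiplication by $1/2$ exactly as you simulate the other local linear steps. On the positive side, your explicit hybrid over the $k$ iterations --- swapping one round at a time from real to simulated and arguing that fresh triples keep the opened masks uniform conditioned on the views of earlier rounds --- is more careful than the paper's treatment, which merely concatenates the per-round simulations and asserts indistinguishability; that part of your write-up is worth keeping.
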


\begin{proof}

The protocol \(\pi\) ensures deterministic correctness by summing the outputs of \(n\) parties modulo \(\mathbb{Z}_{2^k}\). Each party \(P_i\) contributes a specific output \(w_i\) to the function \(f(x)\). The privacy of \(\pi\) is shown by proving that for any input \(x\), a simulator can generate a view for party \(P\) that is indistinguishable from its actual view during the protocol's execution, encompassing inputs, internal randomness, and received messages.

The square root function consists of four main operations on secret-shared data: addition, multiplication by a public constant, Beaver multiplication and inversion. The addition and the multiplication by a public constant operations are executed locally on the secret-shared data.

The square root function begins with a 'for' loop that iterates \(k\) times, with \(k\) being fixed and known to all parties. Within this loop, the inverse of the initial guess is computed. Subsequently, the output of the inversion function is multiplied by the share of input of the square root function using Beaver multiplication. Following this, the output of this multiplication is locally added to the initial guess. Finally, the output from the previous computation is multiplied by 1/2 to yield the final approved result for the square root function.

We now define the actual view of \(P\) within protocol \(\pi\) for the square function:

\begin{align*}
\operatorname{view}_I^\pi(x) = (&x_I, m^{inv(1)}_2, m^{inv(1)}_3, \ldots, m^{inv(1)}_n, m^{mul(1)}_2,\\
&m^{mul(1)}_3, \ldots, m^{mul(1)}_n, \ldots, m^{inv(k)}_2, m^{inv(k)}_3\\
&, \ldots, m^{inv(k)}_n, m^{mul(k)}_2, m^{mul(k)}_3, \ldots, m^{mul(k)}_n)
\end{align*}

where $x_I$ is a share of the input of square function, \(m^{inv}_1, m^{inv}_2, \ldots, m^{inv}_n\)  are messages received by \(P\) from other parties for the inversion operation and \(m^{mul}_2 , m^{mul}_3, \ldots,  m^{mul}_n)\) are messages received by \(P\) from other parties for the Beaver multiplication operation in each of $k$ iterations.

All these elements, are additively shared between the parties or results of the addition of two additively shared values. Therefore, based on the definition in Algorithm \ref{ss}, they are independent uniformly random within the \(\mathbb{Z}_{2^k}\) set. \\

Furthermore, from the $\operatorname{view}_I^\pi(x)$, \(P\) can compute all other relevant quantities and this view contains all of \(P\)'s information.

Now, we construct a simulator, denoted \(S\), for the view of party \(P\) that simulates the computation of the square root function. The simulator \(S\) begins by simulating the initial setup, including the distribution of the secret-shared input $\tilde{Y_1}$ and the initial guess $\tilde{x_0}$ for the square root computation. For the inversion operation, \(S\) simulates the messages received from other parties during the execution of the inverse function \(\tilde{m}^{inv}_1, \tilde{m}^{inv}_2, \ldots, \tilde{m}^{inv}_n\) and the outcome of the inversion function \(\tilde{w}^{inv}\). According to Theorem \ref{invproof}, the inverse function can be computed privately, and the simulator can successfully simulate these exchanged messages, which are indistinguishable from random values in \(\mathbb{Z}_{2^k}\).

For the Beaver multiplication, \(S\) simulates the Beaver triples and the messages exchanged during the Beaver multiplication process \(\tilde{m}^{mul}_1, \tilde{m}^{mul}_2, \ldots, \tilde{m}^{mul}_n\). In simulating the Beaver multiplication function, \(S\) adheres to the privacy guarantees provided by Theorem \ref{beavmulti}. Therefore, the simulator can successfully simulate these exchanged messages, which are indistinguishable from random values in \(\mathbb{Z}_{2^k}\). Also, since in actual protocol, parties do not reveal the shares of the output of Beaver multiplication function \(\tilde{w}^{mul}_n\), \(S\) can simulate \(\tilde{w}^{mul}_1\) as uniformly random and independent values in \(\mathbb{Z}_{2^k}\) that makes it indistinguishable.

The final steps of the algorithm involve local computations by \(P\), including adding the result of Beaver multiplication (which is called here $[\![w']\!]$) to the current estimate of square root ( $[\![x_n]\!]$) and then multiplying by \(1/2\) to adjust for the next iteration or final result. In this step,  \(S\) simulates these computations by following the same steps, the simulator $S$ selects uniformly distributed and independent random variables, $[\![\tilde{w}' ]\!]$ and $[\![\tilde{x_n}' ]\!]$ from \(\mathbb{Z}_{2^k}\) and compute the final share value for the this round of approximation as follows:
\[
[\![\tilde{x}_{n+1}]\!] \leftarrow \frac{1}{2} \left( [\![\tilde{x}_{n}]\!] + [\!\tilde{[w']}\!]\right)
\]

These steps effectively mimic the actual protocol’s computation, and since the square root computation iterates this process \(k\) times, \(S\) repeats the simulation for each iteration, updating the simulated values according to the algorithm's steps.

Now we show the complete simulator view for square function:

\begin{align*}
S(x_I, f_I(x)) = (&x_I, \tilde{m}^{inv(1)}_2, \tilde{m}^{inv(1)}_3, \ldots, \tilde{m}^{inv(1)}_n, \tilde{m}^{inv}_2, \\
&\tilde{m}^{inv(2)}_3,\ldots, \tilde{m}^{inv(2)}_n, \ldots, \tilde{m}^{inv(k)}_2, \\
&\tilde{m}^{inv(k)}_3, \ldots, \tilde{m}^{inv(k)}_n, \tilde{m}^{mul(1)}_2, \\
&\tilde{m}^{mul(1)}_3, \ldots, \tilde{m}^{mul(1)}_n, \tilde{m}^{mul(2)}_2,\\
&\tilde{m}^{mul(2)}_3, \ldots, \tilde{m}^{mul(2)}_n, \ldots, \tilde{m}^{mul(k)}_2, \\
&\tilde{m}^{mul(k)}_3, \ldots, \tilde{m}^{mul(k)}_n, \tilde{w}^{mul(1)}_1, \\
&\tilde{w}^{mul(2)}_1, \ldots, \tilde{w}^{mul(k)}_1, \tilde{w}^{inv(1)}_1, \\
&\tilde{w}^{inv(2)}_1, \ldots,\tilde{w}^{inv(k)}_1)
\end{align*}

Every component of \( S \) is independent and uniformly random within \(\mathbb{Z}_{2^k}\) which represents \( P \)'s private input. As \( S \) is designed to select values that are not only independently and uniformly random but also drawn from the same distribution as those in \(\operatorname{view}_I^\pi\), these components become indistinguishable. This indistinguishability, applicable to all components ensures that they resemble random values, thus reinforcing the privacy aspect of the protocol.

\end{proof}

\begin{theorem}[privacy w.r.t semi-honest behavior]
\label{norm}
Consider a secure multi-party computation protocol \(\pi\) designed to compute the norm of a vector, as detailed in line 1-3 of Algorithm \ref{ComputeL2Normalization}. This protocol involves \(n\) parties, denoted as \(P_1, P_2, \ldots, P_n\), with a subset \(I \subset [n]\) consisting of semi-honest parties where at least one party is honest, while the rest are semi-honest. Each party \(P_i\) holds an additive share of a set of private inputs \(x\), where \(x\) represents the collective input vector shared among the parties. The protocol \(\pi\) is said to privately compute the norm function \(f(x)\) if there exists a probabilistic polynomial-time simulator \(S\) that can simulate a view for any subset of parties that is statistically indistinguishable from their actual views during the protocol's execution, given all potential inputs.

\end{theorem}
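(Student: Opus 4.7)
The plan is to adapt the simulation-based strategy already established for Theorems~\ref{beavmulti}, \ref{invproof}, and \ref{sqr_proof}, since the norm subprotocol is a sequential composition of (i) componentwise Beaver multiplications to square each coordinate, (ii) a purely local summation of shares, and (iii) one invocation of the square-root subroutine on the resulting secret-shared sum. First I would make explicit the input/output interface: each party $P_i$ holds an additive share $[\![\mathbf{w}]\!]_i \in \mathbb{Z}_{2^k}^d$ of a vector of length $d$, and the functionality returns a share $[\![\|\mathbf{w}\|_2]\!]_i$ to each party, with correctness following from the correctness of Beaver multiplication, the linearity of additive sharing under summation, and Theorem~\ref{sqr_proof}.

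Next I would write down the actual view of the semi-honest coalition $I$ during protocol $\pi$. For the squaring phase, this consists of the $d$ Beaver triples and the opened values $\{\delta_j, \epsilon_j\}_{j=1}^{d}$ exchanged for each coordinate; the local summation contributes no messages; finally, the square-root phase contributes the messages already characterised in the proof of Theorem~\ref{sqr_proof}. Concretely,
\begin{align*}
\operatorname{view}_I^{\pi}(x) = \bigl(&x_I,\; \{\tilde{a}_I^{(j)}, \tilde{b}_I^{(j)}, \tilde{c}_I^{(j)}, \delta^{(j)}, \epsilon^{(j)}\}_{j=1}^{d},\\
&\; m^{sqrt}_1, m^{sqrt}_2, \ldots, m^{sqrt}_n\bigr),
\end{align*}
where $m^{sqrt}_\ell$ abbreviates the full message transcript produced by the square-root subroutine.

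The simulator $S_I$ is then built compositionally. For each of the $d$ squaring multiplications, $S_I$ invokes the Beaver-multiplication simulator from Theorem~\ref{beavmulti} to produce $(\tilde{a}_I^{(j)}, \tilde{b}_I^{(j)}, \tilde{c}_I^{(j)}, \tilde{\delta}^{(j)}, \tilde{\epsilon}^{(j)})$, each uniform and independent in $\mathbb{Z}_{2^k}$. Since the summation step is purely local, $S_I$ merely computes $[\![\widetilde{\mathrm{sum}}]\!]_I$ deterministically from the simulated shares of the squared entries, matching what an honest party would compute. Finally, $S_I$ invokes the square-root simulator from Theorem~\ref{sqr_proof} on the (simulated) share of the sum to produce $\tilde{m}^{sqrt}_1, \ldots, \tilde{m}^{sqrt}_n$. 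Indistinguishability of the composed view then follows because each component either (a) lies in the image of a simulator already proven to yield a distribution perfectly indistinguishable from the real transcript, or (b) is a deterministic function of such components, and the fresh Beaver triples used in distinct multiplications guarantee independence across coordinates.

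The main obstacle will be the composition argument at the seam between the squaring and square-root phases: the output shares of the Beaver multiplications become the summands whose sum is fed as input to the square-root subroutine, so one must argue that reusing these shares as inputs to a later subprotocol does not compromise the privacy guarantees of either stage. I would handle this in the same way as in Theorem~\ref{invproof}, by noting that the output shares of the squaring multiplications are never opened, remain uniformly random in $\mathbb{Z}_{2^k}$ subject only to summing to the true squared value, and that the square-root subroutine's simulator is parameterised over precisely such an additive-sharing of an input; thus $S_I$ can sample the ``intermediate'' shares uniformly and independently, invoke the downstream simulator on them, and obtain a joint distribution whose statistical distance from $\operatorname{view}_I^{\pi}(x)$ is zero. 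Correctness of $f(x)$ at the revealed output of the broader protocol is preserved by construction, completing the proof.
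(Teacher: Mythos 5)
Your proposal is correct and follows essentially the same route as the paper's proof: decompose the norm computation into the coordinatewise Beaver squarings, the message-free local summation, and one call to the square-root subroutine, then compose the simulators of Theorems~\ref{beavmulti} and~\ref{sqr_proof} and argue indistinguishability from the uniformity of unopened additive shares. Your explicit treatment of the seam where the squaring outputs feed the square-root input is slightly more careful than the paper's, but it is the same argument the paper uses in Theorem~\ref{invproof} and does not constitute a different approach.
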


\begin{proof}

The protocol \(\pi\) guarantees deterministic correctness by computing the sum of the outputs from \(n\) parties modulo \(\mathbb{Z}_{2^k}\), where each party \(P_i\) contributes an output \(w_i\) corresponding to their share of the computation of \(f(x)\). To demonstrate the privacy of \(\pi\), we show that for any collective input \(x\), a simulator \(S\) can generate a simulated view for any honest party or group of honest parties that mirrors their actual experience during the protocol's execution. This includes their inputs, the internal randomness used, and the messages received from other parties.

The norm computation involves three key operations on secret-shared data: Beaver multiplication for squaring the input vector, local summation of the squared values, and finally, square root computation to derive the norm.

The actual view of \(P_I\), for the computation is as follows:

\[
\operatorname{view}_I^\pi(x) = (x_I, m^{mul}_1, m^{mul}_2, \ldots, m^{mul}_n, m^{sqrt}_1, m^{sqrt}_2, \ldots, m^{sqrt}_n),
\]

where \(x_I\) denotes the vector of additive shares of the input held by all honest parties, \(m^{mul}_i\) are the messages exchanged during the Beaver multiplication, and \(m^{sqrt}_i\) are the messages exchanged for the square root computation.

Given the additive sharing scheme, all elements involved in \(\operatorname{view}_I^\pi(\vec{x})\) are independently and uniformly distributed within \(\mathbb{Z}_{2^k}\), ensuring that from \(P\)'s perspective, all components are indistinguishable from random. This setup, based on the privacy-preserving properties of Beaver multiplication and square root operations, allows \(S\) to simulate a plausible set of interactions that \(P\) (or any group of honest parties) would observe, without revealing any private information. The simulator \(S\) thus constructs a simulated view that includes simulated messages for each step of the norm computation, ensuring these simulated components are indistinguishable from those in a real protocol execution.

We construct a simulator, denoted \(S\), for simulating the collective view of honest parties in the vector norm computation within the SecureDL protocol. The role of \(S\) is to create a plausible simulation of the protocol execution as it would appear to the honest parties, focusing on the data exchanges and computations that occur without assuming control over the adversary's direct observations.

In the case of Beaver multiplication, \(S\) simulates the Beaver triples and the messages exchanged during this phase, such as \(\tilde{m}^{mul}_1, \tilde{m}^{mul}_2, \ldots, \tilde{m}^{mul}_n\). This simulation is consistent with the privacy assurances specified in Theorem \ref{beavmulti}, making these simulated messages indistinguishable from random values in \(\mathbb{Z}_{2^k}\) to any observer, thereby maintaining the integrity of the privacy guarantees.

For local computations, including the summation of outputs, the simulator \(S\) performs:

\[
[\![\tilde{\text{sum}}]\!] \leftarrow \sum_{i=1}^{n} [\![\tilde{w}_i^{mul}]\!],
\]

imitating the aggregation process faithfully without disclosing any individual's private data.

During the square root computation, \(S\) constructs messages such as \(\tilde{m}^{sqrt}_1, \tilde{m}^{sqrt}_2, \ldots, \tilde{m}^{sqrt}_n\), simulating the inter-party communications. These simulated messages are crafted to be statistically indistinguishable from those in a genuine protocol run, aligning with the privacy-preserving methodology of the square root function described in Theorem \ref{sqr_proof}.

Considering multiple honest parties, the simulated view \(S\) provides for the protocol includes the inputs of all such parties, denoted as \(x_1, x_2, \ldots, x_m\), where \(m\) is the number of honest parties. Thus, the complete simulator view is represented as:

\[
S(x_I, f_I(x)) = (x_I, \tilde{m}^{mul}_1, \tilde{m}^{mul}_2, \ldots, \tilde{m}^{mul}_n, \tilde{m}^{sqrt}_1, \tilde{m}^{sqrt}_2, \ldots, \tilde{m}^{sqrt}_n),
\]

Each component of \(S\) is independently and uniformly random within \(\mathbb{Z}_{2^k}\), ensuring that the simulation reflects the private inputs and the exchange of messages among honest parties, all while remaining indistinguishable from a real execution to any external observer.

\end{proof}

\begin{theorem}[privacy w.r.t semi-honest behavior]
\label{cosine}
Consider a secure multi-party computation protocol \(\pi\) designed to compute a function \(f(x, y)\), utilizing the cosine similarity computation between two vectors (as detailed in Section \ref{alg:CosineSimilarityIntegratedNorm}). This protocol involves \(n\) parties, denoted as \(P_1, P_2, \ldots, P_n\), with a subset \(I \subset [n]\) consisting of semi-honest parties where at least one party is honest, while the rest may be semi-honest. Each party \(P_i\) possesses an additive share of the private inputs \(x\) and \(y\). The protocol \(\pi\) is considered to effectively compute the cosine similarity function \(f(x, y)\) in a private manner if a probabilistic polynomial-time simulator \(S\) can be constructed. For any given set of potential inputs, the simulated view of \(P_i\) should be statistically indistinguishable from \(P_i\)'s actual view during the execution of the protocol.\end{theorem}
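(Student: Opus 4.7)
The plan is to follow the same simulation-based strategy used for Theorems \ref{beavmulti}, \ref{invproof}, \ref{sqr_proof}, and \ref{norm}, and to treat the cosine similarity protocol as a composition of already-proven private sub-protocols. First, I would decompose Algorithm \ref{alg:CosineSimilarityIntegratedNorm} into its constituent secure building blocks: (i) a Beaver multiplication producing the dot product $[\![\text{dotProduct}]\!]$, (ii) two invocations of the vector-norm protocol of Theorem \ref{norm} yielding $[\![\text{norm}_a]\!]$ and $[\![\text{norm}_b]\!]$, (iii) a Beaver multiplication of the two norms followed by the inversion protocol of Theorem \ref{invproof} to obtain $[\![\text{denom}]\!]$, and (iv) a final Beaver multiplication combining the dot product with the denominator to yield $[\![\text{cosine}]\!]$. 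All intermediate quantities remain additive shares over $\mathbb{Z}_{2^k}$ and are never reconstructed by any party.

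Next, I would write down the real view of a corrupt subset $I\subset[n]$ with at most $n-1$ members as the concatenation of their input shares together with every message received during the sub-protocols enumerated above, namely $\operatorname{view}_I^\pi(\bar x,\bar y)=(x_I,y_I,m^{dp}_\star,m^{\mathrm{norm}_a}_\star,m^{\mathrm{norm}_b}_\star,m^{\mathrm{denom}}_\star,m^{\mathrm{inv}}_\star,m^{\mathrm{final}}_\star)$, noting that each invocation consumes a fresh, independent batch of Beaver triples and preprocessing material. Then I would define the simulator $S_I$ by calling, in the same order, the sub-simulators guaranteed by the earlier theorems: $S^{\mathrm{mul}}$ for every Beaver multiplication step, $S^{\mathrm{norm}}$ for the two norm computations, and $S^{\mathrm{inv}}$ for the inversion, concatenating their outputs into a single simulated transcript whose marginals are uniform and independent over $\mathbb{Z}_{2^k}$.

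The indistinguishability of the simulated and real transcripts would then follow from a standard hybrid argument over the six sub-protocols: I would replace one real sub-protocol at a time with its simulated counterpart and appeal to the privacy guarantees already established. The step I expect to be the main obstacle is justifying this composition rigorously, because outputs of earlier sub-protocols (the two norms) are fed as inputs to later ones (the inversion and the final multiplication), so one must argue that these internal feed-throughs do not leak additional information about the honest parties' vectors. I would address this by observing that (a) every intermediate value only ever appears in the transcript as a freshly randomised additive share in $\mathbb{Z}_{2^k}$, (b) each sub-protocol consumes its own independent preprocessing material, so the masking variables $\delta$ and $\epsilon$ remain statistically independent across sub-protocols, and (c) by Assumption 4 the small approximation errors of the iterative inversion and square-root routines are negligible and introduce no distinguishing statistics. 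Stitching the hybrids together yields $\{S_I(\bar x,\bar y,f_I(\bar x,\bar y))\}\stackrel{\text{perf}}{\equiv}\{\operatorname{view}_I^\pi(\bar x,\bar y)\}$, establishing Theorem \ref{cosine}.
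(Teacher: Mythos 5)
Your proposal follows essentially the same route as the paper's proof: the identical decomposition of Algorithm \ref{alg:CosineSimilarityIntegratedNorm} into Beaver multiplications, two norm computations, and an inversion, with a simulator assembled by invoking the sub-simulators of Theorems \ref{beavmulti}, \ref{norm}, and \ref{invproof} and relying on the fact that intermediate output shares are never reconstructed and each step consumes fresh preprocessing material. Your explicit hybrid argument and your discussion of the feed-through of intermediate shares into later sub-protocols make the composition step somewhat more careful than the paper's treatment, which simply asserts that all transcript components are independently uniform in \(\mathbb{Z}_{2^k}\), but the underlying argument is the same.
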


\begin{proof}

The protocol \(\pi\) ensures deterministic correctness by summing the outputs of \(n\) parties modulo \(\mathbb{Z}_{2^k}\). Each party \(P_i\) contributes a specific output \(w_i\) to the function \(f(x, y)\). The privacy of \(\pi\) is shown by proving that for any inputs \(x\) and \(y\), a simulator can generate a view for party \(P\) that is indistinguishable from its actual view during the protocol's execution, encompassing inputs, internal randomness, and received messages.

In the first step of the cosine similarity computation function, the dot product of the inputs should be computed. In our work, a single Beaver multiplication is utilized to perform this operation. Following this, the norm of each vector is computed, and then these norm values are multiplied using Beaver triple multiplication. Subsequently, the inverse of this multiplication needs to be computed. Finally, the result of that last step is multiplied by the dot product results computed in the first step to calculate the final output of the cosine similarity function.

We now define the actual view of \(P\) within protocol \(\pi\) for the square function:

\begin{align*}
\operatorname{view}_I^\pi(x,y) = (&x_I, y_I  m^{mul_1}_1,m^{mul_1}_2, \ldots, m^{mul_1}_n, m^{norm_1}_1,\\
&m^{norm_1}_2, \ldots, m^{norm_1}_n, m^{norm_2}_1,m^{norm_2}_2, \ldots,\\
&m^{norm_2}_n, m^{mul_2}_1,m^{mul_2}_2, \ldots, m^{mul_2}_n,m^{inv}_1,\\
&m^{inv}_2, \ldots, m^{inv}_n, m^{mul_3}_1,m^{mul_3}_2, \ldots, m^{mul_3}_n)
\end{align*}

where $x_I$ and $y_I$ are the share of the inputs of norm computation function, \(m^{mul_1}_1,m^{mul_1}_2, \ldots, m^{mul_1}_n\), \(m^{mul_2}_1,m^{mul_2}_2, \ldots, m^{mul_2}_n\) and \(m^{mul_3}_1,m^{mul_3}_2, \ldots, m^{mul_3}_n\) are messages received by \(P\) from other parties for the Beaver multiplication operations and \(m^{norm_1}_1, m^{norm_1}_2, \ldots, m^{norm_1}_n\) and \(m^{norm_2}_1, m^{norm_2}_2, \ldots, m^{norm_2}_n\) are messages received by \(P\) from other parties for the norm computation operation and finally \(m^{inv}_1, m^{inv}_2, \ldots, m^{inv}_n\) are messages received by \(P\) from other parties for the inversion operation.

All these elements, are additively shared between the parties or results of the addition of two additively shared values. Therefore, based on the definition in Algorithm \ref{ss}, they are independent uniformly random within the \(\mathbb{Z}_{2^k}\) set. Furthermore, from the $\operatorname{view}_I^\pi(x,y)$, \(P\) can compute all other relevant quantities and this view contains all of \(P\)'s information. \\

Now, we construct a simulator, denoted \(S\), for the view of party \(P\) that simulates the computation of the cosine similarity computation function. The simulator \(S\) begins by simulating the initial setup, including the of the secret-shared inputs $\tilde{w_a}$ and $\tilde{w_b}$. For the Beaver multiplications, \(S\) simulates the Beaver triples and the messages exchanged during the Beaver multiplication process \(\tilde{m}^{mul_1}_1, \tilde{m}^{mul_1}_2, \ldots, \tilde{m}^{mul_1}_n\), \(\tilde{m}^{mul_2}_1, \tilde{m}^{mul_2}_2, \ldots, \tilde{m}^{mul_2}_n\) . In simulating the Beaver multiplication function, \(S\) adheres to the privacy guarantees provided by Theorem \ref{beavmulti}. Therefore, the simulator can successfully simulate these exchanged messages, which are indistinguishable from random values in \(\mathbb{Z}_{2^k}\). Also, since in actual protocol, parties do not reveal the shares of the output of Beaver multiplication functions in any steps \(\tilde{w}^{mul_1}_n\),\(\tilde{w}^{mul_2}_n\)and \(\tilde{w}^{mul_3}_n\), \(S\) can simulate them as uniformly random and independent values in \(\mathbb{Z}_{2^k}\) that makes it indistinguishable.

Next step involves vector norm computations, \(S\) simulates the messages exchanged during the computation process \(\tilde{m}^{norm_1}_1, \tilde{m}^{norm_1}_2, \ldots, \tilde{m}^{norm_1}_n\) and \(\tilde{m}^{norm_2}_1, \tilde{m}^{norm_2}_2, \ldots, \tilde{m}^{norm_2}_n\) . In simulating the norm computation function, \(S\) adheres to the privacy guarantees provided by Theorem \ref{norm}. Therefore, the simulator can successfully simulate these exchanged messages, which are indistinguishable from random values in \(\mathbb{Z}_{2^k}\). Also, since in actual protocol, parties do not reveal the shares of the output of norm functions in any steps \(\tilde{w}^{norm_1}_n\) and \(\tilde{w}^{norm_2}_n\), \(S\) can simulate them as uniformly random and independent values in \(\mathbb{Z}_{2^k}\) that makes it indistinguishable.

Finally, the inversion computation operation, \(S\) simulates the messages received from other parties during the execution of this function \(\tilde{m}^{inv}_1, \tilde{m}^{inv}_2, \ldots, \tilde{m}^{inv}_n\) and the outcome of the inversion function \(\tilde{w}^{inv}\). According to Theorem \ref{invproof}, the inversion function can be computed privately, and the simulator can successfully simulate these exchanged messages, which are indistinguishable from random values in \(\mathbb{Z}_{2^k}\). Also, since in actual protocol, parties do not reveal the shares of the output of inversion function \(\tilde{w}^{inv}_n\), \(S\) can simulate \(\tilde{w}^{inv}_1\) as uniformly random and independent values in \(\mathbb{Z}_{2^k}\) that makes it indistinguishable.

Now we show the complete simulator view for cosine similarity computation:

\begin{align*}
S(x_I, f_I(x,y)) = (&x_I, y_I, \tilde{m}^{mul_1}_1, \tilde{m}^{mul_1}_2, \ldots, \tilde{m}^{mul_1}_n, \tilde{m}^{mul_2}_1\\
&, \tilde{m}^{mul_2}_2, \ldots, \tilde{m}^{mul_2}_n\, \tilde{m}^{mul_3}_1, \tilde{m}^{mul_3}_2, \ldots,\\
&\tilde{m}^{mul_3}_n\,\tilde{m}^{norm_1}_1, \tilde{m}^{norm_1}_2, \ldots, \tilde{m}^{norm_1}_n,\\
&\tilde{m}^{norm_2}_1, \tilde{m}^{norm_2}_2, \ldots, \tilde{m}^{norm_2}_n, \tilde{m}^{inv}_1, \tilde{m}^{inv}_2\\
&, \ldots, \tilde{m}^{inv}_n)
\end{align*}

Every component of \( S \) is independent and uniformly random within \(\mathbb{Z}_{2^k}\) which represents \( P \)'s private input. As \( S \) is designed to select values that are not only independently and uniformly random but also drawn from the same distribution as those in \(\operatorname{view}_I^\pi\), these components become indistinguishable. This indistinguishability, applicable to all components ensures that they resemble random values, thus reinforcing the privacy aspect of the protocol.

\end{proof}

\begin{theorem}[privacy w.r.t semi-honest behavior]
\label{l2norm}
Consider a secure multi-party computation protocol \(\pi\) designed to compute a function \(f(x, y)\) utilizing the L2-Normalization computation between two vectors (detailed in Section \ref{ComputeL2Normalization}). This protocol involves \(n\) parties, denoted as \(P_1, P_2, \ldots, P_n\), with a subset \(I \subset [n]\) consisting of semi-honest parties where at least one party is honest, while the rest are semi-honest. The protocol \(\pi\) is considered to privately compute the cosine similarity function \(f(x, y)\) if a probabilistic polynomial-time simulator \(S_i\) can be constructed such that, given all potential inputs, the simulated view of \(P_i\) is statistically indistinguishable from \(P_i\)'s actual view during the execution of the protocol.
\end{theorem}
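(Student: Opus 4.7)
The plan is to mirror the structure of the preceding simulation-based proofs (particularly Theorem~\ref{cosine} for cosine similarity and Theorem~\ref{norm} for vector norm), decomposing the L2-Normalization computation into its constituent sub-protocols and invoking the corresponding privacy results. Looking at Algorithm~\ref{ComputeL2Normalization}, the computation consists of: two invocations of vector-norm computation (lines 1--3 and 4--6), one invocation of the inversion protocol on $[\![\lVert \mathbf{w}_a \rVert_2]\!]$, one Beaver multiplication to form the ratio $[\![r]\!]$, and finally a Beaver multiplication (applied element-wise) to rescale $[\![\mathbf{w}_a]\!]$ by $[\![r]\!]$. Each of these sub-protocols has already been shown to admit a simulator in Theorems~\ref{beavmulti}, \ref{invproof}, and~\ref{norm}.

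First I would define the real view of the honest subset $I$ as
\[
\operatorname{view}_I^\pi(x,y) = \bigl(x_I, y_I, m^{norm_a}_\star, m^{norm_b}_\star, m^{inv}_\star, m^{mul_r}_\star, m^{mul_{\mathbf{w}_a}}_\star\bigr),
\]
where each $m^{\cdot}_\star$ abbreviates the collection of messages received from the other parties during the corresponding sub-protocol. Since every exchanged value is either an additive share (uniform in $\mathbb{Z}_{2^k}$) or an opening of a Beaver-masked value ($\delta=x-a$, $\epsilon=y-b$ with $a,b$ uniform), every component is independently and uniformly distributed from $P_I$'s perspective, which is exactly the property the sub-protocol simulators rely on.

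Then I would construct $S$ by sequential composition: invoke the simulator from Theorem~\ref{norm} twice to produce $\tilde m^{norm_a}_\star$ and $\tilde m^{norm_b}_\star$ along with simulated output shares $[\![\tilde{\lVert\mathbf{w}_a\rVert}_2]\!]$ and $[\![\tilde{\lVert\mathbf{w}_b\rVert}_2]\!]$; invoke the simulator from Theorem~\ref{invproof} on the simulated norm share to yield $\tilde m^{inv}_\star$ and $[\![\tilde{\lVert\mathbf{w}_a\rVert}_2^{-1}]\!]$; then invoke the Beaver-multiplication simulator from Theorem~\ref{beavmulti} twice, once for the ratio $[\![\tilde r]\!]$ and once (element-wise) for $[\![\tilde{\mathbf{w}}_a']\!]$. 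The complete simulator view is
\[
S(x_I, f_I(x,y)) = \bigl(x_I, y_I, \tilde m^{norm_a}_\star, \tilde m^{norm_b}_\star, \tilde m^{inv}_\star, \tilde m^{mul_r}_\star, \tilde m^{mul_{\mathbf{w}_a}}_\star\bigr),
\]
with every component independently and uniformly distributed in $\mathbb{Z}_{2^k}$, matching the real distribution component-wise.

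The main obstacle, as in the cosine-similarity proof, will be justifying the composition cleanly: the output shares of each sub-protocol (the norm shares, the inverse share, the ratio share) serve as inputs to the next sub-protocol, so I must argue that they are never reconstructed in plaintext and that fresh, independent Beaver triples are used for each multiplication. This guarantees that the simulated intermediate shares, being uniform and independent, are indistinguishable from the real ones and can safely be fed into the subsequent simulator. Once this composition argument is in place, statistical indistinguishability of $S(x_I, f_I(x,y))$ from $\operatorname{view}_I^\pi(x,y)$ follows immediately from the indistinguishability established in Theorems~\ref{beavmulti}, \ref{invproof}, and~\ref{norm}, completing the proof.
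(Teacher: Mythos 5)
Your proposal is correct and follows essentially the same route as the paper's own proof: decompose the L2-Normalization into its norm, inversion, and Beaver-multiplication sub-protocols, invoke Theorems~\ref{norm}, \ref{invproof}, and~\ref{beavmulti} for each, and justify sequential composition by noting that intermediate output shares are never opened and fresh triples are used, so they remain uniform and independent in $\mathbb{Z}_{2^k}$. Your explicit flagging of the composition step as the key point to justify is, if anything, slightly more careful than the paper's treatment.
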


\begin{proof}

The protocol \(\pi\) ensures deterministic correctness by summing the outputs of \(n\) parties modulo \(\mathbb{Z}_{2^k}\). Each party \(P_i\) contributes a specific output \(w_i\) to the function \(f(x, y)\). The privacy of \(\pi\) is shown by proving that for any inputs \(x\) and \(y\), a simulator can generate a view for party \(P\) that is indistinguishable from its actual view during the protocol's execution, encompassing inputs, internal randomness, and received messages.

In the first step of the L2-Normalization function, the norm of each input vector is computed. Then, the inverse of \([\![\mathbf{w}_a]\!]\), the vector to be normalized, is computed. Following this, the inverse value is multiplied by the norm of the reference vector \([\![\mathbf{w}_b]\!]\) using the Beaver multiplication function. Finally, the output of this multiplication is multiplied by \([\![\mathbf{w}_a]\!]\) using the Beaver multiplication function to produce the final output.

We now define the actual view of \(P\) within protocol \(\pi\) for the square function:

\begin{align*}
\operatorname{view}_I^\pi(x,y) = (&x_I, y_I  m^{mul_1}_1,m^{mul_1}_2, \ldots, m^{mul_1}_n, m^{norm_1}_1,\\
&m^{norm_1}_2, \ldots, m^{norm_1}_n, m^{norm_2}_1,m^{norm_2}_2, \ldots,\\
&m^{norm_2}_n, m^{mul_2}_1,m^{mul_2}_2, \ldots, m^{mul_2}_n,m^{inv}_1,\\
&m^{inv}_2, \ldots, m^{inv}_n)
\end{align*}

where $x_I$ is a share of the input of norm computation function, \(m^{mul_1}_1,m^{mul_1}_2, \ldots, m^{mul_1}_n\), and \(m^{mul_2}_1,m^{mul_2}_2, \ldots, m^{mul_2}_n\) are messages received by \(P\) from other parties for the Beaver multiplication operations and \(m^{norm_1}_1, m^{norm_1}_2, \ldots, m^{norm_1}_n\) and \(m^{norm_2}_1, m^{norm_2}_2, \ldots, m^{norm_2}_n\) are messages received by \(P\) from other parties for the norm computation operation and finally \(m^{inv}_1, m^{inv}_2, \ldots, m^{inv}_n\) are messages received by \(P\) from other parties for the inversion operation.

All these elements, are additively shared between the parties or results of the addition of two additively shared values. Therefore, based on the definition in Algorithm \ref{ss}, they are independent uniformly random within the \(\mathbb{Z}_{2^k}\) set. Furthermore, from the $\operatorname{view}^\pi(x,y)$, \(P\) can compute all other relevant quantities and this view contains all of \(P\)'s information. \\

Now, we construct a simulator, denoted \(S\), for the view of party \(P\) that simulates the computation of the cosine similarity computation function. The simulator \(S\) begins by simulating the initial setup, including the of the secret-shared inputs $\tilde{w_a}$ and $\tilde{w_b}$. For the Beaver multiplications, \(S\) simulates the Beaver triples and the messages exchanged during the Beaver multiplication process \(\tilde{m}^{mul_1}_1, \tilde{m}^{mul_1}_2, \ldots, \tilde{m}^{mul_1}_n\) and \(\tilde{m}^{mul_2}_1, \tilde{m}^{mul_2}_2, \ldots, \tilde{m}^{mul_2}_n\) . In simulating the Beaver multiplication function, \(S\) adheres to the privacy guarantees provided by Theorem \ref{beavmulti}. Therefore, the simulator can successfully simulate these exchanged messages, which are indistinguishable from random values in \(\mathbb{Z}_{2^k}\). Also, since in actual protocol, parties do not reveal the shares of the output of Beaver multiplication functions in any steps \(\tilde{w}^{mul_1}_n\),\(\tilde{w}^{mul_2}_n\) and \(S\) can simulate them as uniformly random and independent values in \(\mathbb{Z}_{2^k}\) that makes it indistinguishable.

Next step involves vector norm computations, \(S\) simulates the messages exchanged during the computation process \(\tilde{m}^{norm_1}_1, \tilde{m}^{norm_1}_2, \ldots, \tilde{m}^{norm_1}_n\) and \(\tilde{m}^{norm_2}_1, \tilde{m}^{norm_2}_2, \ldots, \tilde{m}^{norm_2}_n\) . In simulating the norm computation function, \(S\) adheres to the privacy guarantees provided by Theorem \ref{norm}. Therefore, the simulator can successfully simulate these exchanged messages, which are indistinguishable from random values in \(\mathbb{Z}_{2^k}\). Also, since in actual protocol, parties do not reveal the shares of the output of norm functions in any steps \(\tilde{w}^{norm_!}_n\) and \(\tilde{w}^{norm_2}_n\), \(S\) can simulate them as uniformly random and independent values in \(\mathbb{Z}_{2^k}\) that makes it indistinguishable.

Finally, the inversion computation operation, \(S\) simulates the messages received from other parties during the execution of this function \(\tilde{m}^{inv}_1, \tilde{m}^{inv}_2, \ldots, \tilde{m}^{inv}_n\) and the outcome of the inversion function \(\tilde{w}^{inv}\). According to Theorem \ref{invproof}, the inversion  function can be computed privately, and the simulator can successfully simulate these exchanged messages, which are indistinguishable from random values in \(\mathbb{Z}_{2^k}\). Also, since in actual protocol, parties do not reveal the shares of the output of square root function \(\tilde{w}^{inv}_n\), \(S\) can simulate \(\tilde{w}^{inv}_1\) as uniformly random and independent values in \(\mathbb{Z}_{2^k}\) that makes it indistinguishable.

Now we show the complete simulator view for cosine similarity function:

\begin{align*}
S(x_I, f_I(x,y)) = (&x_1, y_1, \tilde{m}^{mul_1}_1, \tilde{m}^{mul_1}_2, \ldots, \tilde{m}^{mul_1}_n, \tilde{m}^{mul_2}_1\\
&, \tilde{m}^{mul_2}_2, \ldots, \tilde{m}^{mul_2}_n\, \tilde{m}^{norm_1}_1, \tilde{m}^{norm_1}_2, \ldots,\\
& \tilde{m}^{norm_1}_n, \tilde{m}^{norm_2}_1, \tilde{m}^{norm_2}_2, \ldots, \tilde{m}^{norm_2}_n,\\
& \tilde{m}^{inv}_1, \tilde{m}^{inv}_2, \ldots, \tilde{m}^{inv}_n)
\end{align*}

Every component of \( S \) is independent and uniformly random within \(\mathbb{Z}_{2^k}\) which represents \( P \)'s private input. As \( S \) is designed to select values that are not only independently and uniformly random but also drawn from the same distribution as those in \(\operatorname{view}_I^\pi\), these components become indistinguishable. This indistinguishability, applicable to all components ensures that they resemble random values, thus reinforcing the privacy aspect of the protocol.

\end{proof}



\end{document}